\theoremstyle{plain}
\newtheorem{thm}{\protect\theoremname}
\theoremstyle{plain}
\theoremstyle{plain}
\newtheorem{prop}[thm]{\protect\propositionname}
\theoremstyle{plain}
\providecommand{\lemmaname}{Lemma}
\providecommand{\propositionname}{Proposition}
\providecommand{\theoremname}{Theorem}
\providecommand{\lemmaname}{Lemma}
\providecommand{\propositionname}{Proposition}
\providecommand{\theoremname}{Theorem}
\providecommand{\lemmaname}{Lemma}
\providecommand{\propositionname}{Proposition}
\providecommand{\theoremname}{Theorem}
\begin{document}
\setboolean{single}{false}

\title{ Cross-layer Chase Combining with Selective Retransmission,  Analysis and Throughput Optimization  
for OFDM Systems}
\author{Taniya Shafique,~Zia~Muhammad~\IEEEmembership{Member,~IEEE,} and~Huy-Dung Han~\IEEEmembership{Member,~IEEE}}
%
\maketitle
\begin{abstract}
In this paper, we present bandwidth efficient retransmission method employong selective retransmission approach at modulation layer under orthogonal frequency division multiplexing (OFDM) signaling. Our proposed cross-layer design embeds a selective retransmission sub-layer in physical layer (PHY)  that targets retransmission of information symbols transmitted over poor quality OFDM sub-carriers. Most of the times, few errors in decoded bit stream result in packet failure at medium access control (MAC) layer. The unnecessary retransmission of good quality information symbols of a failed packet has detrimental effect on overall throughput of  transceiver. We propose a cross-layer Chase combining with selective retransmission (CCSR) method by blending Chase combining at MAC layer and selective retransmission in PHY. The selective retransmission in PHY targets the poor quality information symbols prior to decoding, which results into lower hybrid automatic repeat reQuest (HARQ) retransmissions at MAC layer. We also present tight bit-error rate (BER) upper bound and tight throughput lower bound for CCSR method.  In order to maximize throughput of the proposed method, we formulate optimization problem  with respect to the amount of information to be retransmitted in selective retransmission. The simulation results demonstrate significant throughput gain of the  proposed CCSR method as compared to  conventional Chase combining method.
\end{abstract}
\noindent \textbf{Keywords: } Hybrid ARQ, LDPC, throughput, OFDM, retransmission, cross-layer, LTE. 

\section{Introduction}

\label{sec:introduction} 
The contemporary wireless communication
standards such as LTE-advanced \cite{LTEDOc} integrate new technologies
to meet  increasing need of high data rate. The current and future
communication systems employ multiple-input multiple-output (MIMO)
technology due to its potential to achieve higher data rate and
diversity. In order to assure error-free communication with high throughput over dynamic wireless channels,
many packet error detection and correction protocols have evolved over
time \cite{ShuLin}. The automatic repeat reQuest (ARQ) methods combats packet loss that occurs
due to channel fading of wireless networks and achieves error-free data transfer using cyclic redundancy check (CRC) approach.
The concept of HARQ integrates ARQ and forward error correction (FEC)
codes  to provide effective means of enhancing
overall throughput of communication systems \cite{ShuLin,HARQBroadBand}. In the event of packet failure, an advanced form
of HARQ incorporates joint decoding by combining soft information
from multiple transmissions of a failed packet. Thus, HARQ is one of
the most important technologies adopted  in the latest communication
standards such as high-speed down link packet access (HSDPA), universal
mobile telecommunications system (UMTS) that pervade 3G and 4G wireless
networks to ensure data reliability. 

In type-I HARQ, the receiver requests retransmission of an erroneous packet and discards
observation of the failed packet. Type-II HARQ is  most commonly used
method and achieves higher throughput. The type-II HARQ is divided
into Chase combining HARQ (CC-HARQ) and incremental redundancy HARQ
(IR-HARQ). In CC-HARQ, the receiver preserves observations of the failed
packet and requests retransmission of full packet. The Chase receiver 
combines \cite{ChaseComb} observations of the failed packet and retransmitted
packet for joint decoding. In the event of packet failure under IR-HARQ,
the receiver requests retransmission of additional parity bits to recover
from errors. In response to the retransmission request, transmitter sends
more parity bits lowering the code rate of FEC code. After receiving requested parity bits, the receiver combines
new parity bits with buffered observations for FEC joint decoding.

Most of the research conducted on HARQ focuses on ARQ and FEC \cite{HARQBroadBand,ShenFitz}
without exploring the modulation layer. Throughput of capacity achieving FEC codes such as low density parity check (LDPC) codes and 
turbo codes is optimized for Rayleigh fading channel  in \cite{ARQJindal} with ARQ and HARQ protocols. 
Mutual information based performance analysis of HARQ over Rayleigh fading channel is provided in 
\cite{JindalHARQAnalyRayFad}.  Optimal power allocation for Chase combining based HARQ is 
optimized in \cite{OptPowerLarson,OutageLarson, AdapPowerLarson,EngAwareShami,GreenHARQ}. Without exploiting channel state
information and frequency diversity of the frequency selective channel,
partial retransmission of the original symbol stream of a failed packet
is addressed in \cite{ZiaDingGLOBECOM2008,ZhiDingPiggy,ZiaDingHARQOSTBC}.
These methods retransmit punctured packet in predetermined fashion without using channel knowledge.
Furthermore, the complexity of joint detection for \emph{partial }retransmission
is not tractable\cite{ZhiDingPiggy,ZiaDingGLOBECOM2008,Jermey2007}.
Partial retransmission of orthogonal space-time block (OSTB) coded
\cite{Alamouti} OFDM signaling is proposed in \cite{ZiaDingHARQOSTBC}.
In \cite{ARQMinitor}, for conventional ARQ protocol, full packet
retransmission at modulation layer is employed when channel gain is
below a threshold value without buffering observations of low signal-to-noise ratio (SNR) channel realization. 

In a typical failed packet, there are small number of
corrupted bits and retransmission of full packet is not necessary.
The receiver can recover from errors by retransmission of potentially
culprit bits. The OFDM signaling allows to identify poor quality bits
corresponding to the sub-carriers that have low SNR. Selective retransmission at modulation layer of OFDM signaling
proposed in \cite{ZiaSelHARQ,ETRIMIMOCondNum,IETOFDMPartial,SelecCCforOSTBC} 
achieves throughput gain as compared
to conventional HARQ methods. However, throughput optimization of selective
retransmission and performance analysis is not addressed in \cite{ZiaSelHARQ}.

 In LTE, two-levels packet retransmission achieves significant throughput gain and reduction in latency of the system. In the event of CRC failure, MAC sub-layer of user plane initiates retransmission request which results  into low latency and higher throughput. The radio link control (RLC) sub-layer combats residual packet errors by ARQ retransmission \cite{EricsonLTELLDesign, LTEBookStafania}.  These retransmission schemes do not exploit channel state information (CSI).
 Most of the contemporary communication standards such as 3G and 4G network adopt OFDM modulation due to inherent robustness to combat multi-path effect  of wireless channel and low complexity transceiver design \cite{Larsson}.  In OFDM based systems, information symbols corresponding to the different coherence bandwidth encounter different channel gains. 
The motivation of selective retransmission owing to the fact that in the event of failed 
packet under OFDM signaling at MAC layer, often receiver can recover from error(s) by retransmitting 
\emph{partial} information  corresponding to the poor quality sub-carriers. An OFDM signaling allows 
 selective retransmission of information symbols transmitted over poor quality sub-carrier at PHY level. 
After receiving the copy of information symbols corresponding to the poor quality sub-carriers, receiver jointly 
decodes data in Chase combining fashion. In this work, we propose a low complexity and bandwidth efficient 
 CCSR cross-layer design at modulation layer for OFDM signaling. We also 
provide  BER and throughput analysis in terms of tight upper BER bound and lower throughput 
bound, respectively, for the proposed retransmission scheme. The amount of information to be retransmitted for 
each sub-carrier in the event of failed packet is a function of signal-to-noise ratio (SNR) of the corresponding 
sub-carrier.  In order to maximize throughput, we use norm of  channel gain for each sub-carriers as channel quality measure and optimize 
threshold $\tau$ on channel norm for selective retransmission. The simulation 
results demonstrate that the proposed method offers substantial throughput gain as compared to the conventional CC method 
in low SNR regime. The results of proposed method show that there is marginal gap between analytical bounds and simulation 
results (Monte Carlo method) for both BER and throughput. The simulation results reveal that throughput 
gain of the proposed scheme also hold with LDPC FEC code.

We organize this manuscript as follows. First, we present the system model in Section~\ref{sec:SystemModel} and problem formulation of CCSR method for OFDM system in Section~\ref{subProblemFormulation}.
In Section~\ref{sec:perfomnceanlysis}, we present BER analysis of
  the CCSR method in terms of BER upper bound.  
Throughput analysis for the proposed CCSR is presented in
Section~\ref{sec:throughAnalysis}. Throughput optimization is performed
in Section~\ref{sec:OptThrough}. We discuss the results in Section~\ref{Sec:simulation}.
Finally, we conclude the proposed work in Section~\ref{Sec:CONCLUSION}.

\section{SYSTEM MODEL }
\label{sec:SystemModel}
The   system model under consideration employs three levels of retransmission as depicted in Figure~\ref{fig:SystemModelSCC}. The two-layer ARQ approach in LTE achieves low latency and high throughput \cite{LTEDOc}. The system model in Figure~\ref{fig:SystemModelSCC} embeds an additional retransmission sub-layer in PHY for selective retransmission under OFDM modulation with $N_s$ sub-carriers over frequency selective channel of $L$ coefficients.   An OFDM signaling converts frequency selective channel 
$\mathbf{h}$  into $N_{s}$ parallel flat-fading channels\cite{Larsson}.
The elements of a channel gain vector  $H=\big[H(1)\,\, s H(\ell)\,\,\, H(N_s)\big]^T$,
where channel vector $H$ is generated by applying
Fourier transformation matrix $F\in\mathcal{C}^{N_{s}\times N_{s}}$
on frequency selective channel $\mathbf{h}$, are independent and identically distributed (i.i.d.) along time with distribution $\mathcal{N}(0,1)$
\cite{Larsson}. The matrix model of the received vector $\mathbf{y}$ over
 $N_s$ sub-carriers can be written as 
\begin{align}
\mathbf{y}=\text{diag}(H)   \mathbf{s}+\mathbf{w},
\end{align}
where vector $\mathbf{w}\sim\mathcal{N}(0,N_{0}I)$ is an additive white
Gaussian noise vector. 
\begin{figure}[t]
\ifthenelse{\boolean{single}}{\centering \includegraphics[width=8cm]{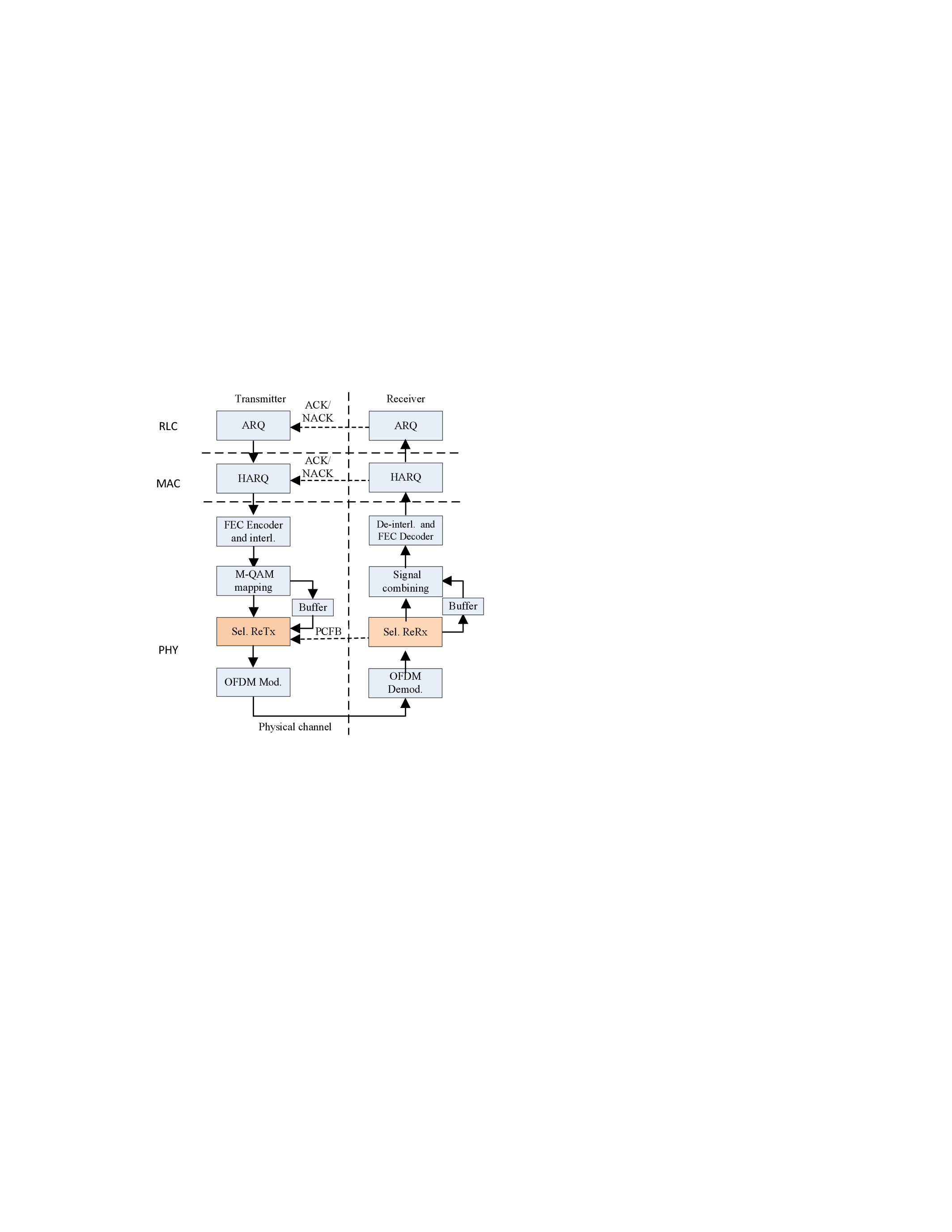} }
{\centering \includegraphics[width=8cm]{SelectiveReTxModel.eps} } 
\caption{ Cross-layer system model for Chase combining with selective retransmission for OFDM system at PHY layer}. 
\label{fig:SystemModelSCC}
\end{figure}
A typical failed packet has few erroneous bits. If we can identify unreliable bits, then full packet retransmission is unnecessary to recover failed packet. In OFDM modulation,  information bits transmitted over sub-carriers with small channel norm
$\Vert H(\ell)\Vert^{2}$ are more susceptible to the channel impairments. Thus, 
an OFDM signaling allows retransmission of targeted information symbols corresponding to the
poor quality sub-carriers instead of unnecessary retransmission of
full packet \cite{ZiaSelHARQ}. As shown in Figure~\ref{fig:SystemModelSCC}, transmitter preserves information symbol vector $\mathbf{s}=[s(1)\,\,\hdots\,\,\,s(N_s)]^T$ and transmits OFDM modulated signal. The selective retransmission module of the receiver requests retransmission of information transmitted over the poor quality  sub-carriers prior to decoding through partial channel feedback  (PCFB). The norm of gain of a sub-carrier is measure of SNR of the sub-carriers. The receiver marks information symbols for selective retransmission corresponding to the sub-carriers, which have  norm of gain below threshold $\tau$.  The threshold $\tau$ controls amount of information to be retransmitted discussed in Section~\ref{sec:perfomnceanlysis}.  In response to the selective retransmission in PHY, peer selective retransmission module of the transmitter appends requested information symbols  to the next OFDM symbol vector. Thus, each OFDM symbol vector consists of new information symbols and information symbols from the buffer in response to the selective retransmission request. The receiver then  performs joint detection by combing observation of the first transmission and subsequent selective retransmission to enhance log-likelihood ratio (LLR) of bits for FEC decoding.  Partial retransmission at modulation layer by targeting poor quality observations selectively improves BER and consequently lowers average number of retransmissions at MAC layer. HARQ layer delivers successfully decoded data units to the ARQ layer. When timeout for missing data unit occurs, ARQ layer request retransmission of  corresponding packet from the peer ARQ layer of the transmitter. We propose CCSR  selective retransmission method that achieves significant throughput gain as compared to the conventional CC-HARQ. 

Note that retransmission of more information does not increase
throughput linearly. The threshold parameter $\tau$ on the channel norm $\Vert H(\ell)\Vert^{2}$
of the $\ell$-th sub-carriers controls amount of information to be retransmitted in selective retransmission with objective to maximize throughput of the communication system. We optimize  threshold $\tau$
in order to maximize throughput $\eta$  of the transceiver under selective retransmission. Throughput of selective retransmission is function of probability
of error, which in fact is function of $\tau$. 
\section{Problem Formulation}
\label{subProblemFormulation}
Now we present   proposed cross-layer CCSR method  for OFDM 
signaling. Similar to conventional HARQ, in  CCSR method, MAC layer initiates 
retransmission in the event of CRC failure. The additional selective retransmission sub-layer in PHY layer initiates selective retransmission of information symbols transmitted over poor quality OFDM sub-carriers prior to decoding. 
Note that OFDM signaling allows retransmission of information symbols transmitted over poor 
quality sub-carriers ($\Vert {H(\ell)}\Vert^2< \tau$ ) selectively avoiding overhead of retransmission 
of information symbols corresponding to good quality sub-carriers, where $\tau$ is threshold on 
channel norm of a sub-carrier.  The receiver feeds back the partial 
channel state information (PCSI) when  each coherent time is elapsed. We assume that due to 
longer retransmission delay, each retransmission encounters independent channel. Next, we present CCSR method under OFDM signaling.

 \begin{figure}[t]
\ifthenelse{\boolean{single}}{\centering \includegraphics[width=13cm]{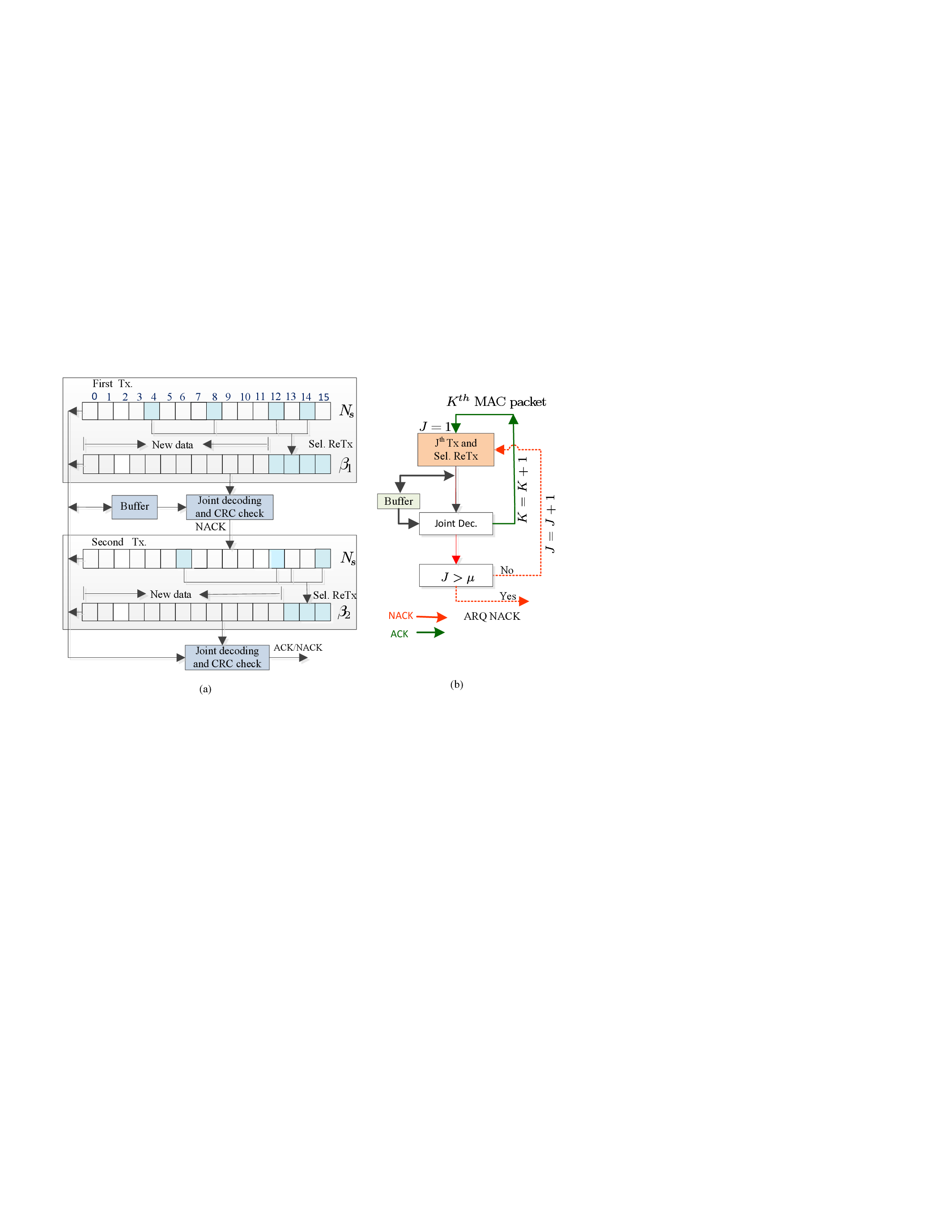}}
{\centering \includegraphics[width=8cm]{CCSRZia.eps}} 
\caption{ (a) Chase combining with selective retransmission under OFDM signaling for two transmission rounds. (b) Flow graph of CCSR method for $\mu$ transmission rounds. } 
\label{fig:CCSRMethod}
\end{figure}


The  proposed CCSR method is depicted for $\mu=2$ transmission rounds in  Figure~\ref{fig:CCSRMethod}(a). Similar to conventional CC-HARQ method,   MAC layer initiates full retransmission in the event of CRC failure. The proposed CCSR method is different from conventional CC-HARQ method in the sense that CCSR method employs an additional selective retransmission of information symbol corresponding to the  poor quality sub-carriers at PHY level for each transmission at MAC layer.
 For the first transmission of each MAC packet, proposed selective retransmission sub-layer initiates retransmission of the information symbols corresponding to the $\beta_1$ many sub-carriers which have $\Vert H_1(\ell)\Vert^2 < \tau$, where $H_1(\ell)$ represents gain of the $\ell$-th sub-carrier corresponding to full transmission of the first round at MAC layer. For example, upper dotted rectangle of Figure~\ref{fig:CCSRMethod}(a) shows that OFDM sub-carriers 4, 8, 12 and 14 ($\beta_1=4$) have $\Vert H_1(\ell)\Vert^2 < \tau$ and are marked for retransmission, where $\ell$ is index to the sub-carrier. Selective retransmission sub-layer of the transmitter in response to the selective retransmission  through feedback channel appends requested information symbols to the very next OFDM symbol.  Note that for  a very poor channel realizations, the proposed selective retransmission sub-layer in PHY  may request retransmission of all information symbols ($\beta_1=N_s$). Similarly, for good quality channel realizations, selective retransmission sub-layer can omit retransmission ($\beta_1=0$). On arrival of requested selective information, receiver performs joint detection and buffers $\beta_1+N_s$  
observations of first transmission and selective retransmission. Note that both transmitter and receiver keep track of information symbols which has been considered for selective retransmission in one transmission round. Each information symbol in one transmission round is consider only once for selective retransmission. With maximum transmission rounds $\mu$ at MAC layer, an information symbol can be considered at most $\mu$ times for selective retransmission at PHY level.

Let $H_{1s}(\ell)$ be the gain of the $\ell$-th sub-carrier corresponding to the selective retransmission for the first transmission round, where subscript "s" stands for selective retransmission. Then the combined channel response $\mathcal{H}_1(\ell)=\big[H_1(\ell)\,\, H_{1s}(\ell)\big]^{T}$ for $\beta_1$ many sub-carriers
is constructed by stacking channel of the first transmission and selective
retransmission. If there are $\beta_1$  many sub-carriers with $\Vert H_1(\ell)\Vert^2< \tau$, then there will be joint detection for $\beta_1$ sub-carrier for the first round of CCSR method. 
The estimate of $N_s-\beta_1$  information symbols which have sub-carrier gain $\Vert H_1(\ell)\Vert^2  \ge \tau$  after equalization is 
\begin{align}
\hat{s}(\ell)=s(\ell)+\frac{H_1^{*}(\ell)\mathbf{w}(\ell)}{\Vert H_1(\ell)\Vert^{2}}=s(\ell)+\mathbf{u}(\ell),
\label{eq:ZFSingleTx}
\end{align}
where $\mathbf{u}(\ell)$ is the effective noise with distribution
$\mathcal{N}(0,\Vert H(\ell)\Vert ^{-2}N_{0})$.  Also the estimate of $\beta_1$ information symbols corresponding to the poor quality sub-carriers from the first full transmission as a result of joint
detection is 
\begin{align}
\hat{s}(\ell)= & s(\ell)+\underbrace{\Vert\mathcal{H}_1(\ell)\Vert^{-2}\mathcal{H}_1^{H}(\ell)\tilde{\mathbf{w}}(\ell)}_{\tilde{\mathbf{u}}(\ell)},\label{eq:SelDetector}
\end{align}
where $\tilde{\mathbf{u}}(\ell)\thicksim\mathcal{N}(0,\Vert\mathcal{H}_1(\ell)\Vert^{-2}N_{0}I_2)$
and $\tilde{\mathbf{w}}(\ell)\thicksim\mathcal{N}(0,\, N_{0}I_{2})$.
Note that MAC layer is unaware of selective retransmission sub-layer in  PHY. The selective retransmission followed by joint detection at modulation layer enhances reliability of the  decoded bits resulting into few CRC failure at MAC layer by selectively retransmitting poor quality information symbols.  

In the event of CRC failure, MAC layer initiates next round of HARQ retransmission by sending NACK signal to the peer MAC layer for the full retransmission of failed data as shown in the lower dotted rectangle of the Figure~\ref{fig:CCSRMethod}(a). In  response to NACK from MAC layer, transmitter retransmits failed full packet similar to conventional CC-HARQ as shown in Figure~\ref{fig:CCSRMethod}(a). Similar to the first transmission, selective retransmission sub-layer initiates selective retransmission of poor quality symbols of retransmitted full packet from MAC layer. 
As a result of selective retransmission, transmitter appends  $\beta_2$ many ($\beta_2=3$ in Figure~\ref{fig:CCSRMethod}(a) )  information symbols to the next immediate OFDM symbol.
When selective retransmission is employed in PHY,  Chase combining  processes $2N_s+\beta_1+\beta_2$ observations instead of  $2N_s$  observations for joint detection. Note that   $E\left[\beta_1\right ]=E\left[\beta_2\right] =N_s P(\Vert H_1(\ell)\Vert^2 < \tau)=N_s P(\Vert H_{1s}(\ell)\Vert^2 < \tau)=N_s m $, where $H_{1s}(\ell)$ is channel gain of the 
$\ell$-th sub-carrier during selective retransmission.

   Let $\mu$ be the maximum number of allowed retransmission of a MAC packet and $J$ be the  round counter for the transmission of the $k$-th MAC packet of a HARQ process for  CCSR method. At the end of the $J$-th round, where $J=1,\hdots,\mu$, receiver
combines buffered observations of full transmissions and selective
retransmissions of $J$ rounds for joint detection. If CRC fails after
$\mu$ rounds of MAC layer, receiver clears observations from the
buffer and sends signal to the ARQ layer. In response to the
packet failure, ARQ layer initiates new retransmission round by initiating
retransmission request to the peer ARQ layer. The flow graph of CCSR  protocol for the $k$-th MAC packet is presented in Figure~\ref{fig:CCSRMethod}(b) and described in Algorithm~\ref{AlgoCCSRRev}. If  CRC failure occurs after $\mu$ MAC  retransmissions, retransmission of failed packet is initiated from ARQ layer. The algorithm of CCSR method is described as follows:
\begin{algorithm}
\caption{CCSR protocol}
\label{AlgoCCSRRev}
\begin{algorithmic}[1]
\State   $J=1$ corresponds to the first transmission of the $k$-th MAC  packet
\State  Selective retransmission of $\beta_J$ sub-carriers and buffering of $N_s$ observations
\State   Joint decoding from $JN_s+\sum_{i=1}^{J} \beta_i$ observations and  CRC check
          \If {CRC satisfies} $k=k+1$, discard observations and go to 1
					\EndIf 
   \If {$J\ge \mu$} declare packet  loss, discard observation, ARQ sub-layer initiates NACK for retransmission of packet and go to 1 \EndIf
\State   $J=J+1$   and go to 2 
\end{algorithmic}
\end{algorithm}

\section{Performance Analysis}
\label{sec:perfomnceanlysis} In this section, we present tight upper bounds on BER 
of joint detection after signal combining and prior to channel decoding for cross-layer CCSR method. One round of transmission of CCSR method includes one full transmission of MAC packet followed by selective retransmission of information symbols corresponding to the sub-carriers which have $\Vert H(\ell) \Vert^2< \tau$, where $H(\ell)$ is gain of $\ell$-th sub-carrier.  We assume that full transmission by MAC layer and subsequent  selective retransmission by PHY layer encounter independent channel realizations.  In this analysis, we consider maximum of $\mu$ transmission rounds at MAC layer. Similar to conventional  Chase combining, joint detection for the $J$-th round combines observations buffered up to  $J$ rounds. Thus, probability of error $P_{e_J}$ of the joint detection of the $J$-th round is lower than that of $J-1$-th round ($P_{e_J} < P_{e_{J-1}}$), where $J\le \mu  \in I^{+}$. In order to evaluate throughput of the proposed CCSR method in Section~\ref{sec:throughAnalysis}, we derive closed form expression for the upper bound on BER of joint detection of $J$-th round under maximum number of $\mu$ transmission rounds of a MAC packet. Now we evaluate $P_{e_J}$ for the $J$-th round.

\subsection{BER analysis of the first round of CCSR}
\label{subsec:BERRound1}
Let $H_1$ and $H_{1s}$ be the complex gain vectors of length $N_s$  corresponding to the full transmission by MAC layer  
and subsequent selective retransmission from PHY layer, respectively. Each element $H_1(\ell)$ and $H_{1s}(\ell)$ of the complex channel gain vectors of OFDM sub-carriers follows Gaussian distribution with  zero-mean and unit variance \cite{Larsson}.  One MAC data unit is mapped to $N_s$ information symbols using M-QAM modulation, where $N_s$ is the number of sub-carriers of an OFDM symbol. Prior to decoding, selective retransmission sub-layer initiates selective retransmission of $\beta_1$ poor quality information symbols as shown in upper dotted  rectangle of  Figure~\ref{fig:CCSRMethod}(a). We denote the outcomes $\Vert H_1(\ell)\Vert^{2}\ge\tau$ and $\Vert H_1(\ell)\Vert^{2}<\tau$ of the first full transmission
by the events $\xi$ and $\xi^{c}$, respectively. The probabilities of events $\xi$ and $\xi^{c}$ are
$P(\xi^{c})  =P(\Vert H_1(\ell)\Vert^{2}<\tau)\;\mbox{and}\; P(\xi)=P(\Vert H_1(\ell)\Vert^{2}\ge\tau),$
respectively, where random variable $\chi_{1}=\Vert H_1(\ell)\Vert^{2}$ has chi-square
distribution of degree $2$ \cite{DgitalCommProakis} and $P(\xi^{c})=1-P(\xi)=P(\chi_{1}<\tau)$
. For Rayleigh fading channel, real and imaginary components of complex channel coefficient of a sub-carrier have zero-mean and variance $\sigma^{2}=\frac{1}{2}$.
 When event $\xi$ occurs,  selective retransmission sub-layer omits retransmission of that particular sub-carriers.

When event $\xi^{c}$ occurs for $\ell$-th sub-carrier, selective retransmission sub-layer request retransmission of that very information symbol $s(\ell)$ and receiver performs joint detection by
combining observation of the full transmission and subsequent selective retransmission.
Note that the random variable $\Vert\mathcal{H}_1(\ell)\Vert^{2}$ in \eqref{eq:SelDetector}
also has chi-square distribution of degree $4$. Also that $\Vert\mathcal{H}_1(\ell)\Vert^{2}=\chi_{1}+\chi_{2}$, 
where chi-square random variables $\chi_{1}$ and $\chi_{2}=\Vert H_{1s}(\ell)\Vert^{2}$
are i.i.d. of degree $2$ each. The bit-error probability
of joint detection for selective retransmission over Rayleigh fading channel is
\begin{align}
P_{e_{1}} & =E_{H}\Big [P(\xi)\,\,\, P_{e\vert\xi}+P(\xi^{c})\,\,\, P_{e\vert\xi^{c}}\Big ],
\label{eq:AvgOverH}
\end{align}
where $P_{e\vert\xi}$ and $P_{e\vert\xi^{c}}$ are the conditional bit-error probabilities of  detection from single observation and joint detection,
respectively. 

The probability of error for joint detection of first round of CCSR is 
\ifthenelse{\boolean{single}}
{
\begin{align}
P_{e_{1}} & = P(\xi)c E_{H\vert\xi}\left[Q\left(\sqrt{g\frac{\chi_{1}}{N_{0}}}\right)\right]+ P(\xi^{c})c E_{\mathcal{H}\vert\xi^{c}}\left[Q\left(\sqrt{g\frac{\chi_{1}+\chi_{2}}{N_{0}}}\right)\right],
\label{eq:ProbOfError}
\end{align} 
}
{
\begin{align}
P_{e_{1}} & = P(\xi)c E_{H\vert\xi}\left[Q\left(\sqrt{g\frac{\chi_{1}}{N_{0}}}\right)\right]+ P(\xi^{c})c E_{\mathcal{H}\vert\xi^{c}}\nonumber\\
&\left[Q\left(\sqrt{g\frac{\chi_{1}+\chi_{2}}{N_{0}}}\right)\right],
\label{eq:ProbOfError}
\end{align}
} 
where $E_{H\vert\xi^{c}}$ and $E_{\mathcal{H}\vert\xi}$ are conditional
expectations. Also,  $c$ and $g$ are modulation constants \cite{Larsson}.
The conditional probability density function $f_{\chi_1\vert\xi^c}(x_1)$  of $f_{\chi_1}(x_1)$ when $\chi_1\ge\tau$ is $f_{\chi_1\vert\xi^c}(x_1) =\dfrac{f_{\chi_1}(x_1)}{P(\xi^c)}$. In order to solve first term of \eqref{eq:ProbOfError}, we have \cite{StarkWoods}
\begin{align}
\begin{split}
\displaystyle  E_{H\vert\xi}\left[Q\left(\sqrt{g\frac{\chi_{1}}{N_{0}}}\right)\right]=&\int_{\tau}^{\infty} Q \left(\sqrt{g\frac{x_{1}}{N_{0}}}\right) \dfrac{f_{\chi_1}(x_1)}{P(\xi)} dx_1
\end{split}
\label{eq:SCC2}
\end{align}
Similarly,
\ifthenelse{\boolean{single}}
{
\begin{align}
\begin{split}
\displaystyle  E_{H\vert\xi^{c}}\left[Q\left(\sqrt{g\frac{\chi_{1}+\chi_2}{N_{0}}}\right)\right]=
&\int_{x_1=0}^{\tau} \int_{x_2=0}^{\infty}Q\left(\sqrt{g\frac{x_{1}+x_2}{N_{0}}}\right)  \dfrac{f_{\chi_1}(x_1)f_{\chi_2}(x_2)}{P(\xi^{c})} dx_2 dx_1
\end{split}
\label{eq:SCC3}
\end{align}
 }
{
\begin{align}
\displaystyle  E_{H\vert\xi^{c}}&\left[Q\left(\sqrt{g\frac{\chi_{1}+\chi_2}{N_{0}}}\right)\right]=
\int_{x_1=0}^{\tau} \int_{x_2=0}^{\infty}Q\left(\sqrt{g\frac{x_{1}+x_2}{N_{0}}}\right) \nonumber\\
& \dfrac{f_{\chi_1}(x_1)f_{\chi_2}(x_2)}{P(\xi^{c})} dx_2 dx_1
\label{eq:SCC3}
\end{align}
} 
The upper bound on $P_{e_1}$ in \eqref{eq:ProbOfError} using approximation
of Q-function \cite{QFuncApprox}, \eqref{eq:SCC2} and \eqref{eq:SCC3} can be written as \cite{SelecCCforOSTBC,StarkWoods } 
\ifthenelse{\boolean{single}}
{
\begin{align}
\begin{split}
P_{e_1}\le&\frac{c}{12} \int_\tau^\infty \exp(-g\frac{x_1}{2N_0})  f_{\chi_1}(x_1) dx_1
+\frac{c}{4}  \int_\tau^\infty \exp(-g\frac{4 x_1}{3.2N_0})  f_{\chi_1}(x_1) dx_1\\	
&+\frac{c}{12} \int_0^\tau \exp(-g\frac{x_1}{2N_0})  f_{\chi_1}(x_1) dx_1	
  \int_0^\infty \exp(-g\frac{x_2}{2N_0})  f_{\chi_2}(x_2) dx_2\\
	&+\frac{c}{4}  \int_0^\tau \exp(-g\frac{4x_1}{3.2N_0})  f_{\chi_1}(x_1) dx_1	
	  \int_0^\infty \exp(-g\frac{4x_2}{3.2N_0})  f_{\chi_2}(x_2) dx_2,
	\label{eq:UBSCC}
\end{split}
\end{align}
 }
{
\begin{align}
&P_{e_1}\le\frac{c}{12} \int_\tau^\infty \exp(-g\frac{x_1}{2N_0})  f_{\chi_1}(x_1) dx_1+\frac{c}{4}  \int_\tau^\infty \nonumber\\
&\exp(-g\frac{4 x_1}{3.2N_0})  f_{\chi_1}(x_1) dx_1	
+\frac{c}{12} \int_0^\tau \exp(-g\frac{x_1}{2N_0})  f_{\chi_1}(x_1) \nonumber\\	
 & dx_1\int_0^\infty \exp(-g\frac{x_2}{2N_0})  f_{\chi_2}(x_2) dx_2
	+\frac{c}{4}  \int_0^\tau \exp(-g\frac{4x_1}{3.2N_0})  \nonumber\\		
	 &f_{\chi_1}(x_1) dx_1 \int_0^\infty \exp(-g\frac{4x_2}{3.2N_0})  f_{\chi_2}(x_2) dx_2,
	\label{eq:UBSCC}
\end{align} 
} 

Note that  $\rho=\sqrt{\frac{\sigma^2 N_{o}}{g\sigma^2+N_{o}}}$, $\rho_{1}=\sqrt{\frac{\sigma^{2}N_{o}}{g_{1}\sigma^{2}+N_{o}}}$
and $g_{1}=\frac{4g}{3}$. 
For 4-QAM constellation $g=2$ and $c=\frac{2}{\text{log}_2M}$.
 By simplifying \eqref{eq:UBSCC}, we have upper bound on probability of error of the joint detection of the first round as follows \cite{DgitalCommProakis}
\ifthenelse{\boolean{single}}
{
\begin{align}
\begin{split}
P_{e_{1}}  \le&\frac{c}{12}\left(\frac{\rho}{\sigma}\right)^{2}\exp\left(-\frac{\tau}{2\rho^{2}}\right)+\frac{c}{4}\left(\frac{\rho_{1}}{\sigma}\right)^{2}\exp\left(-\frac{\tau}{2\rho_{1}^{2}}\right)\\
 & +\frac{c}{12}\left(\frac{\rho}{\sigma}\right)^{4}\left(1-\exp\left(-\frac{\tau}{2\rho^{2}}\right)\right)
 +\frac{c}{4}\left(\frac{\rho_{1}}{\sigma}\right)^{4}\left(1-\exp\left(-\frac{\tau}{2\rho_{1}^{2}}\right)\right).
\end{split}
\label{eq:BERUpperBoundJointSCC}
\end{align}
 }
{
\begin{align}
P_{e_{1}}  \le&\frac{c}{12}\left(\frac{\rho}{\sigma}\right)^{2}\exp\left(-\frac{\tau}{2\rho^{2}}\right)+\frac{c}{4}\left(\frac{\rho_{1}}{\sigma}\right)^{2}\exp\left(-\frac{\tau}{2\rho_{1}^{2}}\right)\nonumber\\
 & +\frac{c}{12}\left(\frac{\rho}{\sigma}\right)^{4}\left(1-\exp\left(-\frac{\tau}{2\rho^{2}}\right)\right)
 +\nonumber\\
&\frac{c}{4}\left(\frac{\rho_{1}}{\sigma}\right)^{4}\left(1-\exp\left(-\frac{\tau}{2\rho_{1}^{2}}\right)\right).
\label{eq:BERUpperBoundJointSCC}
\end{align} 
} 
\noindent
Now we evaluate  BER upper bound of joint detection for second round under  CCSR method.
\subsection{BER analysis of second round of  CCSR}
\label{SubBERCCSR}
Let  $H_{2}(\ell)$ be the gain of the $\ell$-th
sub-carrier corresponding to the  full retransmission initiated from the  MAC layer as a result of CRC failure of MAC packet of the first round and $H_{2s}(\ell)$ be the channel gain  for selective retransmission. In a similar fashion to the first round, the receiver marks data symbol of the second round transmitted over $\ell$-th sub-carrier for selective retransmission for the second round 
 if $\Vert H_{2}\Vert ^2 < \tau$. The channel gains $H_1(\ell)$ and $H_{2}(\ell)$ of the full transmission of the first and second round, respectively, of the $\ell$-th sub-carrier are independent, which results into four possible joint channel vectors for joint detection. We denote each possible outcome of joint channel vector by an event.
The  probability of error of joint detection of the second round is 
\ifthenelse{\boolean{single}}
{
\begin{align}
\begin{split}
P_{e_{2}}  =&E_{H}\big[P(\xi_{1})P_{e\vert\xi_{1}}+P(\xi_{2})P_{e\vert\xi_{2}}+
 P(\xi_{3})P_{e\vert\xi_{3}}+P(\xi_{4})P_{e\vert\xi_{4}}\big],
\label{eq:CCSRjoitBER}
\end{split}
\end{align}
 }
{
\begin{align}
P_{e_{2}}  =&E_{H}\big[P(\xi_{1})P_{e\vert\xi_{1}}+P(\xi_{2})P_{e\vert\xi_{2}}+
 P(\xi_{3})P_{e\vert\xi_{3}}+\nonumber\\
&P(\xi_{4})P_{e\vert\xi_{4}}\big],
\label{eq:CCSRjoitBER}
\end{align}
} 
where events $\xi_{1},\;\xi_{2},\,\xi_{3}$ and $\xi_{4}$ correspond to the four joint channel vectors defined
as follows: 
\begin{enumerate}
\item Event $\xi_{1}$ occurs when $\Vert H_1(\ell)\Vert^{2}\ge \tau$ and $\Vert H_{2}(\ell)\Vert^{2}\ge \tau$ for the first and second full transmissions, respectively. The 
resulting joint channel for joint detection of CCSR is $\mathcal{H}_{1}(\ell)=[H_1(\ell)\;\; H_{2}(\ell)]^{T}$,
where $H_1(\ell)$ and $H_{2}(\ell)$ are i.i.d. channel realizations
with Gaussian distribution of zero-mean and unit variance. The channel
norm $\Vert\mathcal{H}_{1}(\ell)\Vert^{2}=\Vert H_1(\ell)\Vert^{2}+\Vert H_{2}(\ell)\Vert^{2}$
has chi-square distribution. 
\item Event $\xi_{2}$ occurs when $\Vert H_1(\ell)\Vert^{2}< \tau \mbox{ and }\Vert H_{2}(\ell)\Vert^{2}\ge \tau$.
The resulting joint channel response for joint detection of CCSR is
$\mathcal{H}_{2}(\ell)=[H_1(\ell)\;\; H_{1s}(\ell)\;\; H_{2}(\ell)]^{T}$. 
\item Event $\xi_{3}$ occurs when $\Vert H_1(\ell)\Vert^{2}\ge \tau \mbox{ and }\Vert H_{2}(\ell)\Vert^{2}< \tau$.
The resulting joint channel response for joint detection of CCSR is
$\mathcal{H}_{3}(\ell)=[H_1(\ell)\;\; H_{2}(\ell)\;\; H_{2s}(\ell)]^{T}$, where $H_{2s}(\ell)$ is channel gain
of the $\ell$-th sub-carrier selected for retransmission during selective  retransmission of the second round  of packet at MAC layer.  
\item Event $\xi_{4}$ occurs when $\Vert H_1(\ell)\Vert^{2}< \tau$ and
$\Vert H_{2}(\ell)\Vert^{2}< \tau$. The resulting joint channel for joint
detection of CCSR is $\mathcal{H}_{4}(\ell)=[H_1(\ell)\;\; H_{1s}(\ell)\;\; H_{2}(\ell)\;\; H_{2s}(\ell)]^{T}$,
where $H_{1s}(\ell)$ and $H_{2s}(\ell)$ are the channels corresponding
to the selective retransmissions of the first round and second round, respectively.   Note that random variables $\Vert H_{1s}(\ell)\Vert^{2}$
and $\Vert H_{2s}(\ell)\Vert^{2}$ are also i.i.d. with chi-square distribution
of degree  $2$ each. 
\end{enumerate}
The second and third terms in \eqref{eq:CCSRjoitBER} are equivalent due
to the fact that $P(\xi_{2})=P(\xi_{3})$ and random variables $\Vert\mathcal{H}_{2}(\ell)\Vert$ and
$\Vert\mathcal{H}_{3}(\ell)\Vert$ are i.i.d. Therefore, $E_{H}\Big[P(\xi_{2})P_{e\vert\xi_{2}}+P(\xi_{3})P_{e\vert\xi_{3}}\Big]=2E_{H}\Big[P(\xi_{2})P_{e\vert\xi_{2}}\Big]$.
Note that all channel realizations of the $\ell$-th sub-carrier of
an OFDM system are i.i.d. with Gaussian distribution of zero-mean  and
unit variance. In order to achieve  upper bound on BER for joint detection of CCSR method, we rewrite \eqref{eq:CCSRjoitBER} 
as follows:
\begin{align}
 & P_{e_{2}}=cE_{H}\Bigg[P(\xi_{1})Q\left(\sqrt{\frac{g \Vert \;\mathcal{H}_{1}(\ell)\Vert^{2}}{N_{0}}}\right)+2P(\xi_{2}) \nonumber \\
\negmedspace\!\!\!\!\!\!\!\!\!\! & Q\left(\negmedspace\negmedspace\sqrt{\frac{g\Vert \;\mathcal{H}_{2}(\ell)\Vert^{2}}{N_{0}}}\negmedspace\right)\negmedspace\negmedspace+\negmedspace P(\xi_{4})Q\left(\negmedspace\negmedspace\sqrt{\frac{g\Vert \;\mathcal{H}_{4}(\ell)\Vert^{2}}{N_{0}}}\right)\negmedspace\negmedspace\Bigg].
\label{eq:CCSRJointQfunc}
\end{align}
Note that $\Vert \mathcal{H}_{1}(\ell)\Vert^{2}=\chi_{1}+\chi_{2}$ in the first term of \eqref{eq:CCSRJointQfunc} is sum if two i.i.d. chi-square random variables , where $\chi_{1} \ge \tau$ and  $\chi_{2} \ge \tau$.  Using approximation of Q-function in \cite{QFuncApprox} and, following \eqref{eq:SCC2} and \eqref{eq:SCC3}, we have
\ifthenelse{\boolean{single}}
{
\begin{align}
\begin{split}
E_{H}\left[P(\xi_{1})P_{e}\vert\xi_{1}\right]  =  &cE_{H}\left[P(\xi_{1})Q\left(\sqrt{\frac{g\Vert\mathcal{H}_{1}(\ell)\Vert^{2}\big\vert \xi_1}{N_{0}}}\right)\right]\\
\le & \frac{c}{12}\int_{\tau}^{\infty}\exp(-g\frac{x_{1}}{2N_{0}})f_{\chi_{1}}(x_{1})dx_{1}\int_{\tau}^{\infty}\exp(-g\frac{x_{2}}{2N_{0}})f_{\chi_{2}}(x_{2})dx_{2}+\\
  &\frac{c}{4}\int_{\tau}^{\infty}\exp(-g\frac{4x_1}{3.2N_{0}})f_{X_{1}}(x_{1})dx_{1}\int_{\tau}^{\infty}\exp(-g\frac{4x_{2}}{3.2N_{0}})f_{\chi_{2}}(x_{2})dx_{2}\\
= & \frac{c}{12}\left(\frac{\rho}{\sigma}\right)^{4}\left(\exp\left(-\frac{\tau}{2\rho^{2}}\right)\right)^{2}+
 \frac{c}{4}\left(\frac{\rho_{1}}{\sigma}\right)^{4}\left(\exp\left(-\frac{\tau}{2\rho_{1}^{2}}\right)\right)^{2}.
\label{eq:CCSR1}
\end{split}
\end{align}
 }
{
\begin{align}
&E_{H}\left[P(\xi_{1})P_{e}\vert\xi_{1}\right] \! = \!\!cE_{H}\!\!\left[P(\xi_{1})Q\!\!\left(\sqrt{\frac{g\Vert\mathcal{H}_{1}(\ell)\Vert^{2}\big\vert \xi_1}{N_{0}}}\right)\right]\nonumber\\
\le & \frac{c}{12}\int_{\tau}^{\infty}\exp(-g\frac{x_{1}}{2N_{0}})f_{\chi_{1}}(x_{1})dx_{1}\int_{\tau}^{\infty}\exp(-g\frac{x_{2}}{2N_{0}})\nonumber\\
  &f_{\chi_{2}}(x_{2})dx_{2}+\frac{c}{4}\int_{\tau}^{\infty}\exp(-g\frac{4x_1}{3.2N_{0}})f_{\chi_{1}}(x_{1})dx_{1}\nonumber\\
	&\int_{\tau}^{\infty}\exp(-g\frac{4x_{2}}{3.2N_{0}})f_{\chi_{2}}(x_{2})dx_{2}\nonumber\\
= & \frac{c}{12}\left(\frac{\rho}{\sigma}\right)^{4}\!\!\left(\exp\left(-\frac{\tau}{2\rho^{2}}\right)\!\!\right)^{2}+\!\!\!\!
 \frac{c}{4}\left(\frac{\rho_{1}}{\sigma}\right)^{4}\!\!\left(\exp\left(-\frac{\tau}{2\rho_{1}^{2}}\right)\!\!\right)^{2}.
\label{eq:CCSR1}
\end{align} 
} 
Similarly, 
\ifthenelse{\boolean{single}}
{
\begin{align}
\begin{split}
E_{H}&\left[P(\xi_{2})P_{e}\vert\xi_{2}\right]  =
  cE_{H}\left[P(\xi_{2})Q\left(\sqrt{\frac{g\Vert\mathcal{H}_{2}(\ell)\Vert^{2}\big\vert \xi_2}{N_{0}}}\right)\right]\\
\le & \frac{c}{12}\int_{0}^{\tau}\exp(-g\frac{x_{1}}{2N_{0}})f_{\chi_{1}}(x_{1})dx_{1}\int_{0}^{\infty}\exp(-g\frac{x_{1s}}{2N_{0}})f_{\chi_{1s}}(x_{1s})dx_{1s}\int_{\tau}^{\infty}\exp(-g\frac{x_{2}}{2N_{0}})f_{\chi_{2}}(x_{2})dx_{2}+\\
 & \frac{c}{4}\int_{0}^{\tau}\exp(-g\frac{4x_{1}}{3.2N_{0}})f_{\chi_{1}}(x_{1})dx_{1}\int_{0}^{\infty}\exp(-g\frac{4x_{1s}}{3.2N_{0}})f_{\chi_{1s}}(x_{1s})dx_{1s}\int_{\tau}^{\infty}\exp(-g\frac{4x_{2}}{3.2N_{0}})f_{\chi_{2}}(x_{2})dx_{2},
\end{split}
\label{eq:SubEqCCSR}
\end{align}
 }
{
\begin{align}
&E_{H}\left[P(\xi_{2})P_{e}\vert\xi_{2}\right]\!=\!
  cE_{H}\!\left[P(\xi_{2})Q\!\left(\!\sqrt{\frac{g\Vert\mathcal{H}_{2}(\ell)\Vert^{2}\big\vert \xi_2}{N_{0}}}\right)\!\right]	\nonumber\\
&\le  \frac{c}{12}\int_{0}^{\tau}\exp(-g\frac{x_{1}}{2N_{0}})f_{\chi_{1}}(x_{1})dx_{1}\int_{0}^{\infty}\exp(-g\frac{x_{1s}}{2N_{0}})\nonumber\\
&f_{\chi_{1s}}(x_{1s})dx_{1s}\int_{\tau}^{\infty}\exp(-g\frac{x_{2}}{2N_{0}})f_{\chi_{2}}(x_{2})dx_{2}+\nonumber\\
 & \frac{c}{4}\int_{0}^{\tau}\exp(-g\frac{4x_{1}}{3.2N_{0}})f_{\chi_{1}}(x_{1})dx_{1}\int_{0}^{\infty}\exp(-g\frac{4x_{1s}}{3.2N_{0}})\nonumber\\
&f_{\chi_{1s}}(x_{1s})dx_{1s}\int_{\tau}^{\infty}\exp(-g\frac{4x_{2}}{3.2N_{0}})f_{\chi_{2}}(x_{2})dx_{2},
\label{eq:SubEqCCSR}
\end{align}
} 
where $\chi_{1}<\tau$,   $\chi_{1s} \in \mathcal{R}$ and  $\chi_{2} \ge \tau$. Simplifying \eqref{eq:SubEqCCSR}, we have
\ifthenelse{\boolean{single}}
{
\begin{align}
\begin{split}
E_{H}&\left[P(\xi_{2})P_{e}\vert\xi_{2}\right]  \le  \frac{c}{12}\left(\frac{\rho}{\sigma}\right)^{6}\left(1-\exp\left(-\frac{\tau}{2\rho^{2}}\right)\right)\\
&\left(\exp\left(-\frac{\tau}{2\rho^{2}}\right)\right)+\\
 & \frac{c}{4}\left(\frac{\rho_{1}}{\sigma}\right)^{6}\left(1-\exp\left(-\frac{\tau}{2\rho_{1}^{2}}\right)\right)\left(\exp\left(-\frac{\tau}{2\rho_{1}^{2}}\right)\right).
\label{eq:CCSR2}
\end{split}
\end{align}
 }
{
\begin{align}
E_{H}&\left[P(\xi_{2})P_{e}\vert\xi_{2}\right] \le \frac{c}{12}\left(\frac{\rho}{\sigma}\right)^{6}\left(1-\exp\left(-\frac{\tau}{2\rho^{2}}\right)\right)
\nonumber\\
&\left(\exp\left(-\frac{\tau}{2\rho^{2}}\right)\right)+\frac{c}{4}\left(\frac{\rho_{1}}{\sigma}\right)^{6}\left(1-\exp\left(-\frac{\tau}{2\rho_{1}^{2}}\right)\right)
\nonumber\\
&\left(\exp\left(-\frac{\tau}{2\rho_{1}^{2}}\right)\right).
\label{eq:CCSR2}
\end{align}
} 
Also, it can be shown that
\ifthenelse{\boolean{single}}
{
\begin{align}
\begin{split}
E_{H}&\left[P(\xi_{4})P_{e}\vert\xi_{4}\right]  =
  cE_{H}\left[P(\xi_{4})Q\left(\sqrt{\frac{g\Vert\mathcal{H}_{4}(\ell)\Vert^{2}\big\vert \xi_4}{N_{0}}}\right)\right]\\
\le & \frac{c}{12}\Big(\frac{\rho}{ \sigma}\Big)^{8}\Bigg(1-\exp\Big(-\frac{\tau}{2\rho^{2}}\Big)\Bigg)^{2}+ \frac{c}{4}\Big(\frac{\rho_{1}}{ \sigma}\Big)^{8}\Bigg(1-\exp\Big(-\frac{\tau}{2\rho_{1}^{2}}\Big)\Bigg)^{2},
\label{eq:CCSR3}
\end{split}
\end{align}
 }
{
\begin{align}
E_{H}&\left[P(\xi_{4})P_{e}\vert\xi_{4}\right]  =
  cE_{H}\left[P(\xi_{4})Q\left(\sqrt{\frac{g\Vert\mathcal{H}_{4}(\ell)\Vert^{2}\big\vert \xi_4}{N_{0}}}\right)\right]\nonumber\\
\le & \frac{c}{12}\Big(\frac{\rho}{ \sigma}\Big)^{8}\Bigg(1-\exp\Big(-\frac{\tau}{2\rho^{2}}\Big)\Bigg)^{2}+ \frac{c}{4}\Big(\frac{\rho_{1}}{ \sigma}\Big)^{8}\nonumber\\
&\Bigg(1-\exp\Big(-\frac{\tau}{2\rho_{1}^{2}}\Big)\Bigg)^{2},
\label{eq:CCSR3}
\end{align}
} 
where $\chi_{1} < \tau$, $\chi_{1s}\in \mathcal{R}$, $\chi_{2} < \tau$ and $\chi_{2s}\in \mathcal {R}$.
Now  using \eqref{eq:CCSR1},  \eqref{eq:CCSR2} and \eqref{eq:CCSR3} in \eqref{eq:CCSRJointQfunc}, we have
\ifthenelse{\boolean{single}}
{
\begin{align}
\begin{split}
P_{e_{2}} & \le\frac{c}{12}\Big(\frac{\rho}{\sigma}\Big)^{4}\Bigg(\exp\Big(-\frac{\tau}{2\rho^{2}}\Big)\Bigg)^{2}+\frac{c}{4}\Big(\frac{\rho_{1}}{\sigma}\Big)^{4}\Bigg(\exp\Big(-\frac{\tau}{2\rho_{1}^{2}}\Big)\Bigg)^{2}+\\
&\frac{c}{6}\Big(\frac{\rho}{\sigma}\Big)^{6} \Bigg(\exp\Big(-\frac{\tau}{2\rho^{2}}\Big)\Bigg)\Bigg(1-\exp\Big(-\frac{\tau}{2\rho^{2}}\Big)\Bigg)+\\
&\frac{c}{2}\Big(\frac{\rho_{1}}{\sigma}\Big)^{6}\Bigg(\exp\Big(-\frac{\tau}{2\rho_{1}^{2}}\Big)\Bigg) \Bigg(1-\exp\Big(-\frac{\tau}{2\rho_{1}^{2}}\Big)\Bigg)+\\
&\frac{c}{12}\Big(\frac{\rho}{ \sigma}\Big)^{8}\Bigg(1-\exp\Big(-\frac{\tau}{2\rho^{2}}\Big)\Bigg)^{2}
  +\frac{c}{4}\Big(\frac{\rho_{1}}{ \sigma}\Big)^{8}\Bigg(1-\exp\Big(-\frac{\tau}{2\rho_{1}^{2}}\Big)\Bigg)^{2}.
\label{eq:BERUBoundJointCCSR}
\end{split}
\end{align}
 }
{
\begin{align}
&P_{e_{2}}  \le\frac{c}{12}\Big(\frac{\rho}{\sigma}\Big)^{4}\Bigg(\exp\Big(-\frac{\tau}{2\rho^{2}}\Big)\Bigg)^{2}+\frac{c}{4}\Big(\frac{\rho_{1}}{\sigma}\Big)^{4}\nonumber\\
&\Bigg(\exp\Big(-\frac{\tau}{2\rho_{1}^{2}}\Big)\Bigg)^{2}+\frac{c}{6}\Big(\frac{\rho}{\sigma}\Big)^{6} \Bigg(\exp\Big(-\frac{\tau}{2\rho^{2}}\Big)\Bigg)\nonumber\\
&\Bigg(1-\exp\Big(-\frac{\tau}{2\rho^{2}}\Big)\Bigg)+
\frac{c}{2}\Big(\frac{\rho_{1}}{\sigma}\Big)^{6}\Bigg(\exp\Big(-\frac{\tau}{2\rho_{1}^{2}}\Big)\Bigg) \nonumber\\
&\Bigg(1-\exp\Big(-\frac{\tau}{2\rho_{1}^{2}}\Big)\!\Bigg)\!+\!\frac{c}{12}\Big(\frac{\rho}{ \sigma}\Big)^{8}\Bigg(1-\exp\Big(-\frac{\tau}{2\rho^{2}}\Big)\!\Bigg)^{2}
  \nonumber\\
&+\frac{c}{4}\Big(\frac{\rho_{1}}{ \sigma}\Big)^{8}\Bigg(1-\exp\Big(-\frac{\tau}{2\rho_{1}^{2}}\Big)\Bigg)^{2}.
\label{eq:BERUBoundJointCCSR}
\end{align} 
} 
The following proposition generalizes BER upper bound on joint detection for $J$ transmission rounds:
\begin{prop}
The upper bound on BER for joint decoding of $J$ transmission rounds under selective retransmissions is
\ifthenelse{\boolean{single}}
{
\begin{align}
P_{e_{J}}  =&\displaystyle \sum_{i=0}^{J}\Bigg\{\binom{J}{i} \frac{c}{12}   \left(\frac{\rho}{\sigma}\right)^{2(J+i)}\Bigg(\exp\left(-\frac{\tau}{2\rho^{2}}\right)\Bigg )^{(J-i)} \Bigg ( 1-\exp \left(-\frac{\tau}{2\rho^{2}}\right)\Bigg )^{i}  +\nonumber\\
&\binom{J}{i} \frac{c}{4}   \left(\frac{\rho_1}{\sigma}\right)^{2(J+i)}\Bigg(\exp\left(-\frac{\tau}{2\rho_1^{2}}\right)\Bigg )^{(J-i)}\Bigg ( 1-\exp \left(-\frac{\tau}{2\rho_1^{2}}\right)\Bigg )^{i}\Bigg\}. 
\end{align}
 }
{
\begin{align}
&P_{e_{J}}  =\displaystyle \sum_{i=0}^{J}\Bigg\{\binom{J}{i} \frac{c}{12}   \left(\frac{\rho}{\sigma}\right)^{2(J+i)}\Bigg(\exp\left(-\frac{\tau}{2\rho^{2}}\right)\Bigg )^{(J-i)} \nonumber\\
&\Bigg ( 1-\exp \left(-\frac{\tau}{2\rho^{2}}\right)\Bigg )^{i}  +\binom{J}{i} \frac{c}{4}   \left(\frac{\rho_1}{\sigma}\right)^{2(J+i)}\nonumber\\
&\Bigg(\exp\left(-\frac{\tau}{2\rho_1^{2}}\right)\Bigg )^{(J-i)}\Bigg ( 1-\exp \left(-\frac{\tau}{2\rho_1^{2}}\right)\Bigg )^{i}\Bigg\}. 
\end{align} 
} 
\end{prop}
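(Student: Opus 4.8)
The plan is to establish the formula by the same event-decomposition already carried out for $J=1$ and $J=2$, but organized combinatorially so that the $2^J$ joint-channel events collapse into $J+1$ groups indexed by the number of rounds that trigger a selective retransmission. First I would attach to each full transmission $j\in\{1,\dots,J\}$ a Bernoulli indicator that equals $1$ on the event $\Vert H_j(\ell)\Vert^2<\tau$ (the per-round analogue of $\xi^c$) and $0$ otherwise. Since the paper assumes every retransmission sees an independent channel with i.i.d.\ unit-variance Gaussian gains, these $J$ indicators are i.i.d.\ Bernoulli with success probability $P(\xi^c)=P(\chi_1<\tau)$. Conditioning on exactly $i$ of the $J$ full transmissions falling below $\tau$ fixes the structure of the combined channel: the $J-i$ ``good'' rounds each contribute one observation, while the $i$ ``poor'' rounds each contribute two (the full transmission plus its selective retransmission), so the joint channel norm $\Vert\mathcal H(\ell)\Vert^2$ is a sum of $J+i$ independent degree-$2$ chi-square terms, i.e.\ a chi-square of degree $2(J+i)$. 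This is exactly the mechanism that produced the exponents $4,6,8$ in \eqref{eq:BERUBoundJointCCSR} for $J=2$, and it is the step I would justify most carefully.

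Because the $J$ indicators are i.i.d., the $\binom{J}{i}$ distinct choices of which rounds fall below $\tau$ all yield the same conditional bit-error probability; this is where the binomial coefficient $\binom{J}{i}$ enters. I would then write the $J$-round error probability as a sum over $i$ of $\binom{J}{i}$ times the conditional term, exactly paralleling \eqref{eq:CCSRJointQfunc}, with the conditional term equal to $c\,E_H\!\left[Q\!\left(\sqrt{g\Vert\mathcal H(\ell)\Vert^2/N_0}\right)\right]$ evaluated on the corresponding event.

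To bound each conditional term I would invoke the two-exponential Q-function approximation \cite{QFuncApprox} already used for \eqref{eq:BERUpperBoundJointSCC}, turning each $Q\!\left(\sqrt{g\Vert\mathcal H(\ell)\Vert^2/N_0}\right)$ into $\tfrac1{12}\exp(-g\Vert\mathcal H(\ell)\Vert^2/2N_0)+\tfrac14\exp(-g_1\Vert\mathcal H(\ell)\Vert^2/2N_0)$ with $g_1=4g/3$. Since the $J+i$ chi-square summands are independent, each exponential factorizes over the summands, reducing the multivariate expectation to a product of the three elementary one-dimensional integrals already computed in Section~\ref{subsec:BERRound1}: a ``good-round'' factor $(\rho/\sigma)^2\exp(-\tau/2\rho^2)$ (the integral over $[\tau,\infty)$), a ``poor-round'' factor $(\rho/\sigma)^2\bigl(1-\exp(-\tau/2\rho^2)\bigr)$ (the integral over $[0,\tau)$), and an unconditioned ``selective'' factor $(\rho/\sigma)^2$ (the integral over $[0,\infty)$), with the normalizations $P(\xi),P(\xi^c)$ cancelling against the conditional densities exactly as in \eqref{eq:SCC2}--\eqref{eq:SCC3}. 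Collecting the $J-i$ good factors, the $i$ poor factors, and the $i$ selective factors yields $(\rho/\sigma)^{2(J+i)}\bigl(\exp(-\tau/2\rho^2)\bigr)^{J-i}\bigl(1-\exp(-\tau/2\rho^2)\bigr)^{i}$ for the first exponential, and the analogous expression with $\rho_1$ for the second; weighting by $\binom{J}{i}$ and summing over $i=0,\dots,J$ gives the claimed expression.

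The main obstacle, and the only place where more than routine bookkeeping is needed, is verifying that the combined chi-square has degree $2(J+i)$ rather than $2J$ or some other value --- equivalently, that each poor round contributes exactly two observations, each good round exactly one, and the selective-retransmission gains enter unconditioned. This is most safely confirmed by induction on $J$: assuming the stated form at round $J-1$, one extra full transmission either lands above $\tau$ (multiplying each term by a good factor, i.e.\ raising $J-1\to J$ with $i$ fixed) or below $\tau$ (multiplying by a poor factor times a selective factor, i.e.\ raising both the round index and $i$ by one), and Pascal's rule $\binom{J-1}{i}+\binom{J-1}{i-1}=\binom{J}{i}$ reassembles the binomial sum. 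I would present the direct combinatorial count as the main argument and use the inductive step as a consistency check.
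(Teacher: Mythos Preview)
Your proposal is correct and follows essentially the same approach as the paper: partition the $2^J$ joint-channel outcomes into $J+1$ events $\xi_i$ indexed by the number $i$ of full-transmission rounds whose sub-carrier gain falls below $\tau$, invoke exchangeability of the i.i.d.\ channel realizations to collapse each event to a single representative joint channel vector of length $J+i$ weighted by $\binom{J}{i}$, apply the two-exponential $Q$-function bound, and factorize the resulting expectation into the same three one-dimensional integrals computed in the $J=1$ analysis. The inductive consistency check via Pascal's rule that you propose at the end is a helpful addition not present in the paper, but it is auxiliary rather than a different line of argument.
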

\begin{proof}
For $J$ transmission rounds, there are $2^J$ possible joint detection channel vectors and we define $J+1$ events. The event $\xi_{i}$ consists of  $ \binom{J}{i}$ joint channel vectors. Each joint vector includes channel gains from the full transmissions and selective retransmissions. Note that when channel gain  of the $\ell$-th sub-carrier of the full transmission of each round has $\Vert H_{j} \Vert^2 \ge  \tau$, where $j=1,\hdots, J$,  the size of joint channel vector is $J$. The event $\xi_i$ occurs when $i$ sub-carrier realizations out of $J$ realizations of the $\ell$-th sub-carriers corresponding to the full transmissions have   $\Vert H_j{(\ell)}\Vert ^2<\tau$ in any order. Thus, there are $ \binom{J}{i}$ joint channel vector realizations out of $2^{J}$ possible outcomes which  have $i$ sub-carrier gains below threshold $\tau$ in any order for the joint detection of the $J$-th round. 
For example, for joint detection of 4 round of CCSR ($J=4$) and $i=2$, there are $2$ channel gains out of 4 for which channel-norm is lower than $\tau$ in any order. The possible $ \binom{4}{2}=6$ many joint channel gain vectors  belong to the event $\xi_{2}$ are
\ifthenelse{\boolean{single}}
{
\begin{align*}
         \xi_{2}\! =\!\{&\!\!\left[ H_1(\ell)  \;\; H_{1s}(\ell) \;\; H_2(\ell)\;\;  H_{2s} (\ell) \;\;   H_3(\ell)  \;\;  H_4(\ell)\right]^T\!\!\!,
         \left[ H_1(\ell)   \;\; H_{1s}(\ell) \;\; H_2(\ell)\;\;  H_3(\ell) \;\;  H_{3s}(\ell)    \;\;  H_4(\ell)\right]^T\!\!\!,\\
         &\!\!\left[ H_1(\ell)   \;\; H_{1s}(\ell) \;\; H_2(\ell)\;\;  H_3(\ell) \;\;  H_4(\ell)   \;\; H_{4s}(\ell) \right]^T\!\!\!,
         \left[ H_1(\ell)  \;\; H_2(\ell)\;\; H_{2s}(\ell) \;\;  H_3(\ell) \;\;  H_{3s} (\ell)  \;\;   H_4(\ell)\right]^T\!\!\!,\\
         &\!\!\left[ H_1(\ell)  \;\; H_2(\ell)\;\; H_{2s}(\ell)  \;\;  H_3(\ell)  \;\; H_4(\ell)\;\;    H_{4s}(\ell)  \right]^T\!\!\!,
        \left[ H_1(\ell)  \;\; H_2(\ell)\;\; H_3(\ell) \;\; H_{3s}(\ell)  \;\;  H_4(\ell) \;\;   H_{4s}(\ell) \right]^T \}.
\end{align*}
 }
{
\begin{align*}
         \xi_{2}\! =\!\{&\!\!\left[ H_1(\ell)  \;\; H_{1s}(\ell) \;\; H_2(\ell)\;\;  H_{2s} (\ell) \;\;   H_3(\ell)  \;\;  H_4(\ell)\right]^T\!\!\!,\\
         &\left[ H_1(\ell)   \;\; H_{1s}(\ell) \;\; H_2(\ell)\;\;  H_3(\ell) \;\;  H_{3s}(\ell)    \;\;  H_4(\ell)\right]^T\!\!\!,\\
         &\!\!\left[ H_1(\ell)   \;\; H_{1s}(\ell) \;\; H_2(\ell)\;\;  H_3(\ell) \;\;  H_4(\ell)   \;\; H_{4s}(\ell) \right]^T\!\!\!,\\
         &\left[ H_1(\ell)  \;\; H_2(\ell)\;\; H_{2s}(\ell) \;\;  H_3(\ell) \;\;  H_{3s} (\ell)  \;\;   H_4(\ell)\right]^T\!\!\!,\\
         &\!\!\left[ H_1(\ell)  \;\; H_2(\ell)\;\; H_{2s}(\ell)  \;\;  H_3(\ell)  \;\; H_4(\ell)\;\;    H_{4s}(\ell)  \right]^T\!\!\!,\\
        &\left[ H_1(\ell)  \;\; H_2(\ell)\;\; H_3(\ell) \;\; H_{3s}(\ell)  \;\;  H_4(\ell) \;\;   H_{4s}(\ell) \right]^T \}.
\end{align*}
}       
The elements of joint sub-carrier  gain vectors are independent with Gaussian distribution of zero-mean and variance $\sigma^2=\frac{1}{2}$. All the 6 joint channel vectors have equal impact on BER and are equivalent. Therefore, without loss of generality,  all 6 channel gain vectors of  event $\xi_{2}$ can be represented by a single joint channel vector 
\begin{align*}
\mathcal{H}_2=&\left[ H_1(\ell)  \;\;  H_2(\ell)\;\;    H_3(\ell)  \;\;  H_4(\ell)\;\;H_{1s}(\ell)\;\; H_{2s} (\ell) \;\; \right]^T.
\end{align*}
 The joint gain vector, which represents all joint channels of the event $\xi_{i}$  of the $J$-th round can be written as 
\ifthenelse{\boolean{single}}
{
\begin{align*}
\mathcal{H}_i=&\left [ \underbrace{H_{1}(\ell)\hdots H_{i}(\ell)}_{\Vert H_j(\ell) \Vert^2 < \tau} \; \underbrace{H_{i+1}(\ell)\; \hdots\;H_{J}(\ell)}_{\Vert H_j(\ell) \Vert^2 \ge \tau}\;\;H_{1s}(\ell)\hdots\;H_{is}(\ell)\right ]^T, 
\end{align*}
 }
{
\begin{align*}
\mathcal{H}_i=&\Bigg[ \underbrace{H_{1}(\ell)\hdots H_{i}(\ell)}_{\Vert H_j(\ell) \Vert^2 < \tau} \; \underbrace{H_{i+1}(\ell)\; \hdots\;H_{J}(\ell)}_{\Vert H_j(\ell) \Vert^2 \ge \tau}\;\;\nonumber\\
&H_{1s}(\ell)\hdots\;H_{is}(\ell)\Bigg ]^T, 
\end{align*}
} 
and  probability of occurring event $\xi_{i}$ is $P(\xi_{i})=p_1^{(J-i)}p_2^{i}\binom{J}{i}$,  where $P(\Vert H_j{(\ell)}\Vert ^2\ge\tau)=P(\xi)=p_1$ and $P(\Vert H_j{(\ell)}\Vert ^2<\tau)=P(\xi^c)=p_2$. Note that $i$ gains of $\mathcal{H}_i$ have $\Vert  H_{j}\Vert^2 <  \tau$, where $j=1, \hdots,\;i$ and  $J-i$ gains of joint channel $\mathcal{H}_i$ have $\Vert  H_{j}\Vert^2 \ge \tau$, where $j=i+1, \hdots,\;J$ . The event $\xi_0$ has all channel gains with $\Vert  H_{j}\Vert^2 \ge \tau$ and event $\xi_J$ has all channel gains with $\Vert  H_{j}\Vert^2 < \tau$. Furthermore,  all elements of vector $\mathcal{H}_i$ have chi-square distribution of order 2. 
The  probability of error of joint detection of the $J$-th round is
\begin{align}
P_{e_{J}}  =&E_{H}\bigg[ \displaystyle \sum_{i=0}^{J}P(\xi_{i})P_{e\vert\xi_{i}}\bigg]=\displaystyle \sum_{i=0}^{J}E_{H}\big[ P(\xi_{i})P_{e\vert\xi_{i}}\big].
\label{eq:CCSRMjoint}
\end{align}
Now we evaluate $E_{H}\bigg[ P(\xi_{i})P_{e\vert\xi_{i}}\bigg]$ as follows:
\ifthenelse{\boolean{single}}
{
\begin{align}
\begin{split}
E_{H}&\left[P(\xi_{i})P_{eJ}\vert\xi_{i}\right]  =  cE_{H}\left[P(\xi_{i})Q\left(\sqrt{\frac{g\Vert\mathcal{H}_{i}(\ell)\Vert^{2}\big \vert \xi_i}{N_{0}}}\right)\right]\\
\le & \binom{J}{i}\frac{c}{12} \prod_{k=1}^{J-i}\int_{\tau}^{\infty}\exp(-g\frac{x_{k}}{2N_{0}})f_{\chi_{k}}(x_{k})dx_{k}\prod_{k={J-i+1}}^{J}\int_{0}^{\tau}\exp(-g\frac{x_{k}}{2N_{0}})f_{\chi_{k}}(x_{k})dx_{k}. \\
&\prod_{k={1}}^{i}\int_{0}^{\infty}\exp(-g\frac{x_{ks}}{2N_{0}})f_{\chi_{ks}}(x_{ks})dx_{ks}+\binom{J}{i}
 \frac{c}{4} \prod_{k=1}^{J-i}\int_{\tau}^{\infty}\exp(-g\frac{4x_{k}}{3.2N_{0}})f_{\chi_{k}}(x_{k})dx_{k}.\\
&\prod_{k={J-i+1}}^{J}\int_{0}^{\tau}\exp(-g\frac{4x_{k}}{3.2N_{0}})f_{\chi_{k}}(x_{k})dx_{k}.
\prod_{k={1}}^{i}\int_{0}^{\infty}\exp(-g\frac{4x_{ks}}{3.2N_{0}})f_{\chi_{ks}}(x_{ks})dx_{ks}\\
=& \frac{c}{12}\binom{J}{i}\left(\frac{\rho}{\sigma}\right)^{2(J+i)}\Bigg(\exp\left(-\frac{\tau}{2\rho^{2}}\right)\Bigg )^{(J-i)} \Bigg ( 1-\exp \left(-\frac{\tau}{2\rho^{2}}\right)\Bigg )^{i}  +\\
	&\frac{c}{4}\binom{J}{i}    \left(\frac{\rho_1}{\sigma}\right)^{2(J+i)}\Bigg(\exp\left(-\frac{\tau}{2\rho_1^{2}}\right)\Bigg )^{(J-i)}\Bigg ( 1-\exp \left(-\frac{\tau}{2\rho_1^{2}}\right)\Bigg )^{i} . 
\label{eq:CCSRgen}
\end{split}
\end{align}
 }
{
\begin{align}
E_{H}&\left[P(\xi_{i})P_{eJ}\vert\xi_{i}\right]  =  cE_{H}\left[P(\xi_{i})Q\left(\!\!\sqrt{\frac{g\Vert\mathcal{H}_{i}(\ell)\Vert^{2}\big \vert \xi_i}{N_{0}}}\right)\right]\nonumber\\
\le &\binom{J}{i}\frac{c}{12} \prod_{k=1}^{J-i}\int_{\tau}^{\infty}\exp(-g\frac{x_{k}}{2N_{0}})f_{{\chi}_{k}}(x_{k})dx_{k}\prod_{k={J-i+1}}^{J}\nonumber\\
\int_{0}^{\tau}&\exp(-g\frac{x_{k}}{2N_{0}})f_{{\chi}_{k}}(x_{k})dx_{k}\prod_{k={1}}^{i}\int_{0}^{\infty}\exp(-g\frac{x_{ks}}{2N_{0}})\nonumber\\
f&_{{\chi}_{ks}}(x_{ks})dx_{ks}+ \binom{J}{i}
 \frac{c}{4} \prod_{k=1}^{J-i}\int_{\tau}^{\infty}\exp(-g\frac{4x_{k}}{3.2N_{0}})\nonumber\\
f&_{{\chi}_{k}}(x_{k})dx_{k}\prod_{k={J-i+1}}^{J}\int_{0}^{\tau}\exp(-g\frac{4x_{k}}{3.2N_{0}})f_{{\chi}_{k}}(x_{k})dx_{k}\nonumber\\
&
\prod_{k={1}}^{i}\int_{0}^{\infty}\exp(-g\frac{4x_{ks}}{3.2N_{0}})f_{{\chi}_{ks}}(x_{ks})dx_{ks}\nonumber\\
=& \frac{c}{12}\binom{J}{i}    \left(\frac{\rho}{\sigma}\right)^{2(J+i)}\Bigg(\exp\left(-\frac{\tau}{2\rho^{2}}\right)\Bigg )^{(J-i)} \nonumber\\
&\Bigg ( 1-\exp \left(-\frac{\tau}{2\rho^{2}}\right)\Bigg )^{i}  +\frac{c}{4}\binom{J}{i}    \left(\frac{\rho_1}{\sigma}\right)^{2(J+i)}\nonumber\\
	&\Bigg(\exp\left(-\frac{\tau}{2\rho_1^{2}}\right)\Bigg )^{(J-i)}\Bigg ( 1-\exp \left(-\frac{\tau}{2\rho_1^{2}}\right)\Bigg )^{i} . 
\label{eq:CCSRgen}
\end{align}
} 

Now by  substituting  $E_{H}\big[ P(\xi_{i})P_{e\vert\xi_{i}}\big]$ in \eqref{eq:CCSRMjoint}, we have
\ifthenelse{\boolean{single}}
{
\begin{align}
\begin{split}
P_{e_{J}}  =&\displaystyle \sum_{i=0}^{J}\binom{J}{i} \frac{c}{12}   \left(\frac{\rho}{\sigma}\right)^{2(J+i)}\Bigg(\exp\left(-\frac{\tau}{2\rho^{2}}\right)\Bigg )^{(J-i)} \Bigg ( 1-\exp \left(-\frac{\tau}{2\rho^{2}}\right)\Bigg )^{i}  +\\
&\binom{J}{i} \frac{c}{4}   \left(\frac{\rho_1}{\sigma}\right)^{2(J+i)}\Bigg(\exp\left(-\frac{\tau}{2\rho_1^{2}}\right)\Bigg )^{(J-i)}\Bigg ( 1-\exp \left(-\frac{\tau}{2\rho_1^{2}}\right)\Bigg )^{i}. 
\label{eq:CCSRThrm}
\end{split}
\end{align}
 }
{
\begin{align}
&P_{e_{J}}  =\displaystyle \sum_{i=0}^{J}\binom{J}{i} \frac{c}{12}   \left(\frac{\rho}{\sigma}\right)^{2(J+i)}\Bigg(\exp\left(-\frac{\tau}{2\rho^{2}}\right)\Bigg )^{(J-i)} \nonumber\\
&\Bigg ( 1-\exp \left(-\frac{\tau}{2\rho^{2}}\right)\Bigg )^{i}  +\binom{J}{i} \frac{c}{4}   \left(\frac{\rho_1}{\sigma}\right)^{2(J+i)}\nonumber\\
&\Bigg(\exp\left(-\frac{\tau}{2\rho_1^{2}}\right)\Bigg )^{(J-i)}\Bigg ( 1-\exp \left(-\frac{\tau}{2\rho_1^{2}}\right)\Bigg )^{i}. 
\label{eq:CCSRThrm}
\end{align}
} 

\end{proof}
 In order to compute probability for error  of the proposed CCSR method with $\mu$ (maximum allowed MAC retransmissions), we consider  \eqref{eq:CCSRThrm} with highest possible $J=\mu$, where $J=1,2, \hdots,\mu$.
\noindent

In next section, we present throughput analysis and optimization with
respect parameter $\tau$ for CCSR for $\mu$ MAC retransmissions (rounds).

\section{Throughput Analysis}
\label{sec:throughAnalysis}
 Now we present throughput analysis of the proposed CCSR method. In
throughput analysis, we consider non-truncated ARQ which has infinite
many retransmission rounds. There are $\mu$ retransmissions rounds
of a HARQ process at MAC layer in one ARQ round as depicted in Figure~\ref{fig:CCSRMethod}.
One retransmission round of HARQ process consists of a full transmission
of HARQ packet followed by a selective retransmission in PHY.
In practice, transceiver
pair continues retransmission rounds until error-free packet is received
or maximum number of retransmission rounds are reached. For throughput
analysis, we follow conventional definition of throughput $\eta$,
which is the ratio of error-free information bits received $k$ to
the total number of bits transmitted $n$ ( $\eta=\frac{k}{n}$).
Note that $P_{e_{J}}$ is the bit-error probability of the joint detection of the $J$-th round
of MAC transmission given in \eqref{eq:CCSRThrm}. Assuming that each
bit in the frame is independent, probability of receiving an error-free
packet of length $L_{f}$ with probability of bit-error $P_{e_{J}}$
is $p_{c_{J}}=(1-P_{e_{J}})^{L_{f}}$. The probability of receiving
a bad packet is $p_{\epsilon_{J}}=1-p_{c_{J}}$. As a direct consequence
of joint detection, probability of bit-error $P_{e_{1}}>P_{e_{2}}>\hdots>P_{e_{\mu}}$
and probability of receiving correct packet $p_{c_{1}}<p_{c_{2}}<\hdots<p_{c_{\mu}}$. 

One transmission round of HARQ layer consists of $k$ information
bits of the full transmission and $mk$ bits of selective retransmission,
where $m=p_{2}=P(\Vert H_1(\ell)\Vert^2 < \tau)$. Thus, there are $I=k(1+m)$ bits transmitted in
one transmission round of MAC layer to the receiver. As a result of
joint detection, $P_{e_{J}}<P_{e_{J-1}}$, $p_{c_{J}}>p_{c_{J-1}}$
and $p_{\epsilon_{J}}<p_{\epsilon_{J-1}}$. The probability that a packet
fails after two transmission is $p_{\epsilon_{1}}p_{\epsilon_{2}}$.
Note that if CRC failure occurs after $\mu$ transmissions at MAC
layer, receiver discards observations of $\mu$ transmissions and ARQ
layer initiates a new round of transmission of the failed packet. Thus probability
of CRC failure at the end of $\mu$ transmissions is $\alpha={\displaystyle \prod_{j=1}^{\mu}p_{\epsilon_{j}}=p_{\epsilon_{1}}.p_{\epsilon_{2}}\hdots p_{\epsilon_{\mu}}}$.
The probability of a packet to fail after $q$ transmissions with
joint detection of $\mu$ transmissions at MAC layer is 
\begin{align}
p_{\epsilon_{q}} & =\big(p_{\epsilon_{1}}.p_{\epsilon_{2}}\hdots p_{\epsilon_{\mu}}\big)^{\gamma}\prod_{j=1}^{J}p_{\epsilon_{J-j}}=\alpha^{\gamma}\prod_{j=1}^{J}p_{\epsilon_{J-j}}
\end{align}
where $\gamma=\lfloor{\frac{q-1}{\mu}}\rfloor$, $J=\left[(q-1)\mod\mu\right]+1$
and $p_{\epsilon_{0}}=1$. Note that $\gamma=0,1,\hdots,\infty$,
$J=1,2,\hdots,\mu$ represent transmission count
at MAC layer. Since there is joint detection of at most $\mu$ packets
and observations are discarded in the event of successful decoding
or failure of every $\mu$-th packet, the probability receiving error-free
packet in the event of $\mu+1$-th transmission of a packet is $p_{c_{1}}$.
The probability of successful decoding of $q$-th transmission of a failed
packet is $p_{c_{J}}$.

The number of bits transmitted in $q$ transmissions of a packet is
$k(1+m)q$. The average number of bits that transmitter transmits
$n_{\mu}$ for successful decoding of a packet in given channel condition
with $\mu$ transmission rounds at MAC layer in one ARQ round is stated as follows: 
\begin{prop}
The expected number of information bits under maximum number of $\mu$ rounds at MAC layer and non-truncated retransmissions at ARQ layer is sum of $\mu$ summation series as

\begin{align}
\begin{split}n_{\mu}= & \sum_{J=1}^{\mu}n_{J,\mu},
\end{split}
\label{eq:Proposition2}
\end{align}
where
\ifthenelse{\boolean{single}}
{
\begin{align}
{\displaystyle n_{J,\mu}} & =Jb_{J}+(J+\mu)b_{J}\alpha+(J+2\mu)b_{J}\alpha^{2}+(J+3\mu)b_{J}\alpha^{3}+\hdots,
\end{align}
}
{
\begin{align}
{\displaystyle n_{J,\mu}} & =Jb_{J}+(J+\mu)b_{J}\alpha+(J+2\mu)b_{J}\alpha^{2}+\nonumber\\
&(J+3\mu)b_{J}\alpha^{3}+\hdots,
\end{align}
} 
$\alpha={\displaystyle \prod_{j=1}^{\mu}p_{\epsilon_{j}}}$, 
$b_{J}=Ip_{\epsilon_{1}}p_{\epsilon_{2}}\hdots p_{\epsilon_{J-1}}p_{c_{J}}=k(1+m){\displaystyle \prod_{j=1}^{J}p_{\epsilon_{J-j}}}p_{c_{J}}$ and $p_{\epsilon_{0}}=1$.  
\end{prop}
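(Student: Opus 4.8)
The plan is to compute the expectation directly by conditioning on \emph{where} the first successful decoding occurs, and then to recognize the resulting double series as the claimed sum of $\mu$ geometric-type series. The first step is to fix the event structure. Because the receiver clears its buffer after every $\mu$-th MAC transmission and each (re)transmission encounters an independent channel, the ARQ ``super-rounds'' of $\mu$ MAC transmissions each are i.i.d.\ trials, every one of them failing with probability $\alpha=\prod_{j=1}^{\mu}p_{\epsilon_j}$. Successful delivery therefore occurs at a unique location: after $\gamma$ complete failed super-rounds (probability $\alpha^{\gamma}$), followed by the first $J-1$ MAC transmissions of the next super-round failing and its $J$-th succeeding. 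Using the decomposition $q=\gamma\mu+J$ already established above (with $\gamma=\lfloor (q-1)/\mu\rfloor$ and $J=[(q-1)\bmod \mu]+1$), the probability that delivery occurs on the $q$-th MAC transmission is
\begin{align}
\alpha^{\gamma}\Big(\prod_{j=1}^{J-1}p_{\epsilon_j}\Big)p_{c_J},
\label{eq:succprob}
\end{align}
which is exactly the per-round success probability $p_{\epsilon_q}$ recorded in the throughput discussion.

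Next I would count the transmitted bits. Each MAC transmission round carries $I=k(1+m)$ bits ($k$ information bits plus, on average, $mk$ selectively retransmitted bits), so a delivery on the $q$-th transmission has total cost $qI=(\gamma\mu+J)I$, where the $\gamma$ failed super-rounds each still contribute their full $\mu I$ bits. Weighting this cost by \eqref{eq:succprob} and summing over the discrete delivery location gives
\begin{align}
n_{\mu}=I\sum_{\gamma=0}^{\infty}\sum_{J=1}^{\mu}(\gamma\mu+J)\,\alpha^{\gamma}\Big(\prod_{j=1}^{J-1}p_{\epsilon_j}\Big)p_{c_J}.
\label{eq:doublesum}
\end{align}
Identifying $b_J=I\big(\prod_{j=1}^{J-1}p_{\epsilon_j}\big)p_{c_J}=k(1+m)\prod_{j=1}^{J}p_{\epsilon_{J-j}}p_{c_J}$ (the two forms agree because $p_{\epsilon_0}=1$) rewrites the summand as $(\gamma\mu+J)b_J\alpha^{\gamma}$.

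Finally I would interchange the two sums. Since every factor is nonnegative and the $\gamma$-sum converges (its general term is $O(\gamma\alpha^{\gamma})$ with $\alpha<1$, as each $p_{\epsilon_j}<1$), Tonelli's theorem permits grouping by $J$:
\begin{align}
n_{\mu}=\sum_{J=1}^{\mu}\Big(\sum_{\gamma=0}^{\infty}(\gamma\mu+J)b_J\alpha^{\gamma}\Big)=\sum_{J=1}^{\mu}n_{J,\mu},
\label{eq:final}
\end{align}
and writing the inner series term by term reproduces $n_{J,\mu}=Jb_J+(J+\mu)b_J\alpha+(J+2\mu)b_J\alpha^{2}+\cdots$. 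The only point requiring real care is the bookkeeping of the two indices: one must verify that $q\mapsto(\gamma,J)$ is a bijection from the MAC transmission count onto $\{0,1,\dots\}\times\{1,\dots,\mu\}$ and that the cost $qI$ correctly charges the discarded failed super-rounds. Once the event \eqref{eq:succprob} and the cost $qI$ are pinned down, the passage \eqref{eq:doublesum}--\eqref{eq:final} is routine reindexing, so I expect the probabilistic modeling of the delivery location---not the algebra---to be the crux.
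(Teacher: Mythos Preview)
Your argument is correct and is essentially the paper's own proof: both compute the expectation by summing $qI$ times the probability of first success at the $q$-th MAC transmission, then regroup the terms according to the residue $J=[(q-1)\bmod\mu]+1$ to obtain the $\mu$ series $n_{J,\mu}$. The only difference is presentational---the paper writes the sum out term by term and rearranges by inspection, whereas you index by $(\gamma,J)$ from the outset and invoke Tonelli for the interchange---so there is nothing substantive to add.
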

\begin{proof}
The expected number of information bits transmitted to deliver error-free
$k$ information bits for the proposed CCSR method are \cite{ShuCostMiller}
\ifthenelse{\boolean{single}}
{
\begin{align}
\begin{split}n_{\mu}= & Ip_{c_{1}}+2Ip_{\epsilon_{1}}p_{c_{2}}+3Ip_{\epsilon_{1}}p_{\epsilon_{2}}p_{c_{3}}+\hdots+\mu Ip_{\epsilon_{1}}p_{\epsilon_{2}}p_{\epsilon_{3}}\hdots p_{\epsilon_{\mu-1}}p_{c_{\mu}}+ (\mu+1)Ip_{\epsilon_{1}}p_{\epsilon_{2}}p_{\epsilon_{3}}\\& \hdots p_{\epsilon_{\mu}}p_{c_{1}}
+(\mu+2)Ip_{\epsilon_{1}}^{2}p_{\epsilon_{2}}p_{\epsilon_{3}}\hdots p_{\epsilon_{\mu}}p_{c_{2}}+(\mu+3)Ip_{\epsilon_{1}}^{2}p_{\epsilon_{2}}^{2}p_{\epsilon_{3}}\hdots p_{\epsilon_{\mu}}p_{c_{3}}+ \hdots+\\&(\mu+\mu)Ip_{\epsilon_{1}}^{2}p_{\epsilon_{2}}^{2}p_{\epsilon_{3}}^{2}\hdots p_{\epsilon_{\mu-1}}^{2}p_{\epsilon_{\mu}}p_{c_{\mu}}+\hdots +(\gamma\mu+1)Ip_{\epsilon_{1}}^{\gamma}p_{\epsilon_{2}}^{\gamma}\hdots p_{\epsilon_{\mu}}^{\gamma}p_{c_{1}}+(\gamma\mu+2)Ip_{\epsilon_{1}}^{\gamma}p_{\epsilon_{2}}^{\gamma}\\&
\hdots p_{\epsilon_{\mu}}^{\gamma}p_{\epsilon_{1}}p_{c_{2}}+ (\gamma\mu+3)Ip_{\epsilon_{1}}^{\gamma}p_{\epsilon_{2}}^{\gamma}\hdots p_{\epsilon_{\mu}}^{\gamma}p_{\epsilon_{1}}p_{\epsilon_{2}}p_{c_{3}}+\hdots+(\gamma\mu+\mu)Ip_{\epsilon_{1}}^{\gamma}p_{\epsilon_{2}}^{\gamma}\hdots p_{\epsilon_{\mu}}^{\gamma}p_{\epsilon_{1}}p_{\epsilon_{2}}\\&\hdots p_{\epsilon_{\mu-1}}p_{c_{\mu}}+\hdots
\label{eq:prop1_1}
\end{split}
\end{align}

}
{
\begin{align}
&n_{\mu}=  Ip_{c_{1}}+2Ip_{\epsilon_{1}}p_{c_{2}}+3Ip_{\epsilon_{1}}p_{\epsilon_{2}}p_{c_{3}}+\hdots+\mu Ip_{\epsilon_{1}}p_{\epsilon_{2}}p_{\epsilon_{3}}\nonumber\\
&\hdots p_{\epsilon_{\mu-1}}p_{c_{\mu}}+ 
	   (\mu+1)Ip_{\epsilon_{1}}p_{\epsilon_{2}}p_{\epsilon_{3}} \hdots p_{\epsilon_{\mu}}p_{c_{1}}+
	(\mu+2)Ip_{\epsilon_{1}}^{2}\nonumber\\
	&p_{\epsilon_{2}}p_{\epsilon_{3}}\hdots p_{\epsilon_{\mu}}p_{c_{2}}+
	(\mu+3)Ip_{\epsilon_{1}}^{2}p_{\epsilon_{2}}^{2}p_{\epsilon_{3}}\hdots p_{\epsilon_{\mu}}p_{c_{3}}+ \hdots+\nonumber\\
	&(\mu+\mu)Ip_{\epsilon_{1}}^{2}p_{\epsilon_{2}}^{2}p_{\epsilon_{3}}^{2}\hdots p_{\epsilon_{\mu-1}}^{2}p_{c_{\mu}}+\hdots +(\gamma\mu+1)Ip_{\epsilon_{1}}^{\gamma}p_{\epsilon_{2}}^{\gamma}\nonumber\\
	&\hdots p_{\epsilon_{\mu}}^{\gamma}p_{c_{1}}+(\gamma\mu+2)Ip_{\epsilon_{1}}^{\gamma}p_{\epsilon_{2}}^{\gamma}\hdots p_{\epsilon_{\mu}}^{\gamma}p_{\epsilon_{1}}p_{c_{2}}+ (\gamma\mu+3)\nonumber\\
	&Ip_{\epsilon_{1}}^{\gamma}p_{\epsilon_{2}}^{\gamma}\hdots p_{\epsilon_{\mu}}^{\gamma}p_{\epsilon_{1}}p_{\epsilon_{2}}p_{c_{3}}+\hdots+(\gamma\mu+\mu)Ip_{\epsilon_{1}}^{\gamma}p_{\epsilon_{2}}^{\gamma}\hdots p_{\epsilon_{\mu}}^{\gamma}\nonumber\\
&	p_{\epsilon_{1}}p_{\epsilon_{2}}\hdots p_{\epsilon_{\mu-1}}p_{c_{\mu}}+\hdots
\label{eq:prop1_1}
\end{align}
} 
By rearranging \eqref{eq:prop1_1}, we have
\ifthenelse{\boolean{single}}
{
\begin{align}
\begin{split}n_{\mu}= & Ip_{c_{1}}\Big(1+(\mu+1)\alpha+(2\mu+1)\alpha^{2}+(3\mu+1)\alpha^{3}+\hdots\Big)+\\
 & Ip_{\epsilon_{1}}p_{c_{2}}\Big(2+(\mu+2)\alpha+(2\mu+2)\alpha^{2}+(3\mu+2)\alpha^{3}+\hdots\Big)+\hdots+\\
+ & Ip_{\epsilon_{1}}p_{\epsilon_{2}} s p_{\epsilon_{J-1}}p_{c_{J}}\Big(J+(\mu+J)\alpha+(2\mu+J)\alpha^{2}+(3\mu+J)\alpha^{3}+\hdots\Big)+\hdots+\\
+ & Ip_{\epsilon_{1}}p_{\epsilon_{2}} s p_{\epsilon_{\mu-1}}p_{c_{\mu}}\Big(\mu+(\mu+\mu)\alpha+(2\mu+\mu)\alpha^{2}+(3\mu+\mu)\alpha^{3}+\hdots\Big)
\end{split}
\label{eq:prop1_2}
\end{align}
}
{
\begin{align}
&n_{\mu}=  Ip_{c_{1}}\Big(1+(\mu+1)\alpha+(2\mu+1)\alpha^{2}+(3\mu+1)\alpha^{3}+\nonumber\\
&\hdots\Big)+ Ip_{\epsilon_{1}}p_{c_{2}}\Big(2+(\mu+2)\alpha+(2\mu+2)\alpha^{2}+
(3\mu+2)\nonumber\\
&\alpha^{3}+\hdots\Big)+\hdots+ Ip_{\epsilon_{1}}p_{\epsilon_{2}} \hdots p_{\epsilon_{J-1}}p_{c_{J}}\Big(J+(\mu+J)\alpha+\nonumber\\
&(2\mu+J)\alpha^{2}+(3\mu+J)\alpha^{3}+\hdots\Big)+\hdots+Ip_{\epsilon_{1}}p_{\epsilon_{2}} \hdots \nonumber\\
&p_{\epsilon_{\mu-1}}p_{c_{\mu}}\Big(\mu+(\mu+\mu)\alpha+(2\mu+\mu)\alpha^{2}+(3\mu+\mu)\alpha^{3}+\nonumber\\
&\hdots\Big)
\label{eq:prop1_2}
\end{align}
} 
\ifthenelse{\boolean{single}}
{
\begin{align}
n_{\mu}= & b_{1}\Big(1+(\mu+1)\alpha+(2\mu+1)\alpha^{2}+(3\mu+1)\alpha^{3}+\hdots\Big)+\nonumber\\
 & b_{2}\Big(2+(\mu+2)\alpha+(2\mu+2)\alpha^{2}+(3\mu+2)\alpha^{3}+\hdots\Big)+\nonumber\\
&\hdots+\\
 & b_{J}\Big(J+(\mu+J)\alpha+(2\mu+J)\alpha^{2}+(3\mu+J)\alpha^{3}+\hdots\Big)+\nonumber\\
&\hdots+\nonumber\\
+ & b_{\mu}\Big(\mu+(\mu+\mu)\alpha+(2\mu+\mu)\alpha^{2}+(3\mu+\mu)\alpha^{3}+\hdots\Big)
\label{eq:prop1_3}
\end{align}
}
{
\begin{align}
&n_{\mu}= 
 b_{1}\Big(1+(\mu+1)\alpha+(2\mu+1)\alpha^{2}+(3\mu+1)\alpha^{3}+\nonumber\\
 & \hdots\Big)+b_{2}\Big(2+(\mu+2)\alpha+(2\mu+2)\alpha^{2}+(3\mu+2)\alpha^{3}\nonumber\\
&+\hdots\Big)+\hdots+ b_{J}\Big(J+(\mu+J)\alpha+(2\mu+J)\alpha^{2}+\nonumber\\
 &(3\mu+J)\alpha^{3}+\hdots\Big)+\hdots+ b_{\mu}\Big(\mu+(\mu+\mu)\alpha+\nonumber\\
& (2\mu+\mu)\alpha^{2}+(3\mu+\mu)\alpha^{3}+\hdots\Big)
\label{eq:prop1_3}
\end{align}
} 

where $b_{1}=Ip_{c_{1}}$, $b_{2}=Ip_{\epsilon_{1}}p_{c_{2}}$ and
$b_{J}=Ip_{\epsilon_{1}}p_{\epsilon_{2}}\hdots p_{\epsilon_{J-1}}p_{c_{J}}=k(1+m){\displaystyle \prod_{j=1}^{J}p_{\epsilon_{J-j}}p_{c_{J}}}$.
Note that there are $\mu$ summation series in \eqref{eq:prop1_3} and $J$-th summation series in the above expression is
\ifthenelse{\boolean{single}}
{
\begin{align}
n_{J,\mu} & =Jb_{J}+(J+\mu)b_{J}\alpha+(J+2\mu)b_{J}\alpha^{2}+(J+3\mu)b_{J}\alpha^{3}+\hdots.
\end{align}
}
{
\begin{align}
n_{J,\mu}  =&Jb_{J}+(J+\mu)b_{J}\alpha+(J+2\mu)b_{J}\alpha^{2}+(J+3\mu)\nonumber\\
&b_{J}\alpha^{3}+\hdots.
\end{align}
} 
Therefor, expected number of information bits need to be transmitted
to deliver $k(1+m)$ information bits is 

\begin{align}
\begin{split}n_{\mu} & ={\displaystyle \sum_{J=1}^{\mu}n_{J,\mu}.}\end{split}
\end{align}
\end{proof}
Now we are ready to state proposition for throughput of CCSR method
with joint detection of $\mu$ packet at MAC layer. The following proposition
presents throughput of CCSR method:
\begin{prop}
Throughput of CCSR method under maximum number of $\mu$ rounds at MAC layer and non-truncated retransmissions at ARQ layer is
\begin{align}
\eta_{\mathcal{\mu}}= & \frac{(1-\alpha)^{2}}{(1+m){\displaystyle \sum_{J=1}^{\mathcal{\mu}}\prod_{j=1}^{J}p_{\epsilon_{J-j}}p_{c_{J}}\big(J+(\mathcal{\mathcal{\mu}}-J)\alpha}\big)},
\label{eq:PropThroughput}
\end{align}
where ${\displaystyle \alpha=\prod_{J=1}^{\mathcal{\mathcal{\mu}}}p_{\epsilon_{J}}}$
and $p_{\epsilon_{0}}=1$. 
\end{prop}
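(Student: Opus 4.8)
The plan is to start from the definition of throughput $\eta_{\mu}=k/n_{\mu}$ and substitute the expression for $n_{\mu}$ supplied by Proposition~2, namely $n_{\mu}=\sum_{J=1}^{\mu}n_{J,\mu}$. Since the factors $k$ and $(1+m)$ are common to every term, the whole task reduces to evaluating in closed form each of the $\mu$ infinite series $n_{J,\mu}$ and then dividing $k$ by the resulting total.

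First I would rewrite the $J$-th series as a single arithmetic--geometric sum. From Proposition~2,
\begin{align}
n_{J,\mu}=b_{J}\sum_{r=0}^{\infty}\big(J+r\mu\big)\alpha^{r},
\end{align}
where $b_{J}=k(1+m)\prod_{j=1}^{J}p_{\epsilon_{J-j}}p_{c_{J}}$ and $\alpha=\prod_{j=1}^{\mu}p_{\epsilon_{j}}$. The bracket is linear in $r$, so the sum separates into a geometric part and its weighted counterpart. Using the two standard identities $\sum_{r=0}^{\infty}\alpha^{r}=\frac{1}{1-\alpha}$ and $\sum_{r=0}^{\infty}r\alpha^{r}=\frac{\alpha}{(1-\alpha)^{2}}$, valid because $\alpha<1$, I obtain
\begin{align}
\sum_{r=0}^{\infty}\big(J+r\mu\big)\alpha^{r}=\frac{J}{1-\alpha}+\frac{\mu\alpha}{(1-\alpha)^{2}}=\frac{J+(\mu-J)\alpha}{(1-\alpha)^{2}}.
\end{align}

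Next I would assemble the pieces. Substituting back gives $n_{J,\mu}=b_{J}\big(J+(\mu-J)\alpha\big)/(1-\alpha)^{2}$, and summing over $J$ together with the explicit form of $b_{J}$ yields
\begin{align}
n_{\mu}=\frac{k(1+m)}{(1-\alpha)^{2}}\sum_{J=1}^{\mu}\prod_{j=1}^{J}p_{\epsilon_{J-j}}p_{c_{J}}\big(J+(\mu-J)\alpha\big).
\end{align}
Forming $\eta_{\mu}=k/n_{\mu}$ then cancels the factor $k$ and delivers the claimed expression \eqref{eq:PropThroughput}.

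The derivation is essentially bookkeeping, and the only point that genuinely needs care is the convergence of the two infinite series, which requires $\alpha=\prod_{j=1}^{\mu}p_{\epsilon_{j}}<1$. This holds because each $p_{\epsilon_{j}}<1$ whenever the corresponding bit-error probability $P_{e_{J}}>0$, so the non-truncated ARQ layer terminates after finitely many rounds with probability one and the expected bit count $n_{\mu}$ is finite. I would state this convergence condition explicitly before invoking the geometric-series identities, since it is precisely what guarantees that $n_{\mu}$, and hence $\eta_{\mu}$, is well defined.
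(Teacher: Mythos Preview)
Your proposal is correct and follows essentially the same route as the paper: both start from Proposition~2, collapse each arithmetic--geometric series $n_{J,\mu}$ to the closed form $b_{J}\big(J+(\mu-J)\alpha\big)/(1-\alpha)^{2}$, sum over $J$, and invert. The only cosmetic difference is that the paper obtains the closed form by multiplying the series through by $(\alpha-1)$ and telescoping, whereas you invoke the standard identities for $\sum\alpha^{r}$ and $\sum r\alpha^{r}$ directly; one small slip to fix is that the convergence condition $p_{\epsilon_{j}}<1$ is equivalent to $P_{e_{j}}<1$, not $P_{e_{j}}>0$.
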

\begin{proof}
From proposition 2, average number of bits $n_{\mathcal{\mu}}$ required to deliver error-free
packet to the receiver consists of summation of $\mu$
terms. That is,
\begin{align}
n_{\mu}= \sum_{J=1}^{\mathcal{\mu}}n_{J,\mathcal{\mu}}= n_{1,\mu}+n_{2,\mu}+ \hdots+n_{\mu,\mu}
\end{align}
The $J$-th summation series is 
\ifthenelse{\boolean{single}}
{
\begin{align}
n_{J,\mu} & =b_{J}\Big(J+(\mu+J)\alpha+(2\mu+J)\alpha^{2}+(3\mu+J)\alpha^{3}+\hdots\Big)
\label{eq:Substituent1}
\end{align}
}
{
\begin{align}
n_{J,\mu}  =&b_{J}\Big(J+(\mu+J)\alpha+(2\mu+J)\alpha^{2}+(3\mu+J)\alpha^{3}\nonumber\\
&+\hdots\Big)
\label{eq:Substituent1}
\end{align}
} 
\ifthenelse{\boolean{single}}
{
\begin{align}
\begin{split}
n_{J,\mu}(\alpha-1) & =b_{J}\Big (-J-\mu\alpha-\mu\alpha^{2}-\mu\alpha^{3}-\hdots\Big )\\
 & =b_{J}\Big (-J-\mu\alpha(1+\alpha+\alpha^{2}+\alpha^{3}+\hdots)\Big )=b_{J}\Big (-J-\frac{\mu\alpha}{(1-\alpha)}\Big ).
\label{eq:Substite1}
\end{split}
\end{align}
}
{
\begin{align}
n_{J,\mu}(\alpha-1)  &=b_{J}\Big (-J-\mu\alpha-\mu\alpha^{2}-\mu\alpha^{3}-\hdots\Big )\nonumber\\
 & =b_{J}\Big (-J-\mu\alpha(1+\alpha+\alpha^{2}+\alpha^{3}+\hdots)\Big )\nonumber\\
&=b_{J}\Big (-J-\frac{\mu\alpha}{(1-\alpha)}\Big ).
\label{eq:Substite1}
\end{align}
}

Thus, 
\begin{align}
 n_{J,\mu} =b_{J}\frac{J+(\mu-J)\alpha}{(1-\alpha)^{2}}.
\label{eq:Substite2}
\end{align}
Substituting $b_J$ in \eqref{eq:Substite2}, we have 
\begin{align}
\begin{split}
n_{J,\mu} & =(1+m)\prod_{k=1}^{J}p_{\epsilon_{J-k}}p_{c_{J}}\frac{J+(\mu-J)\alpha}{(1-\alpha)^{2}}.
\end{split}
\end{align}
The average number $n_{\mu}$ of transmitted bits required to 
deliver single error free bit at receiver under  $\mu$ transmission rounds at MAC layer
of CCSR method is, 
\begin{align}
\begin{split}n_{\mu} & =\sum_{J=1}^{\mu}n_{J,\mu}
 =(1+m)\sum_{J=1}^{\mu}\prod_{j=1}^{J}p_{\epsilon_{J-j}}p_{c_{J}}\frac{J+(\mu-J)\alpha}{(1-\alpha)^{2}}
\end{split}
\label{eq:AvgBitsn}
\end{align}
Throughput of CCSR method is
\begin{align}
\begin{split}
\eta_{\mu} & =\frac{1}{n_{\mu}}
 =\displaystyle\frac{(1-\alpha)^{2}}{\displaystyle(1+m)\sum_{J=1}^{\mu}\prod_{j=1}^{J}p_{\epsilon_{J-j}}p_{c_{J}}\big(J+(\mu-J)\alpha\big)}
\end{split}
\label{eq:ThroughputExp}
\end{align}
\end{proof}
Note that $\eta_{\mu}$ of CCSR is function of parameter $\tau$
that controls the information to be transmitted during selective retransmission.
The parameter $\tau$ can be optimize to maximize throughput under
OFDM signaling. Next, we discuss search for optimal $\tau$ for the
proposed CCSR method to enhance throughput of an OFDM transceiver.
\section{Throughput Optimization}
\label{sec:OptThrough} 
In this section, we optimize throughput
of the proposed selective retransmission method at modulation layer. The amount
of information that a receiver requests to the transmitter 
in the event of a packet failure has direct impact on the throughput
of the transceiver. Most of the time, especially in high SNR regime, receiver
can recover from bit errors by receiving little more information and employing joint detection.
In selective retransmission at modulation level, threshold $\tau$
on channel norm of a sub-carrier is measure of channel quality.
By choosing proper threshold $\tau$, receiver can request minimum
information needed to recover from errors for the failed packet. The threshold
$\tau$ is function of SNR and modulations such as 4-QAM and 16-QAM.
It is clear from \eqref{eq:PropThroughput}
that throughput of CCSR method is  a function
of frame-error rate (FER). Furthermore, FER is not a linear or quadratic
function of SNR and parameter $\tau$. Now we write unconstrained
optimization problem for throughput $\eta$ with respect to parameter
$\tau$ as follows:
\begin{equation}
\mathbf{\tau_{o}}=\underset{\tau}{\text{arg  max}}\quad \eta= f(\tau,SNR).
\label{eq:etao}
\end{equation}
\begin{figure}[ht]
\ifthenelse{\boolean{single}}{\centering \includegraphics[width=11cm]{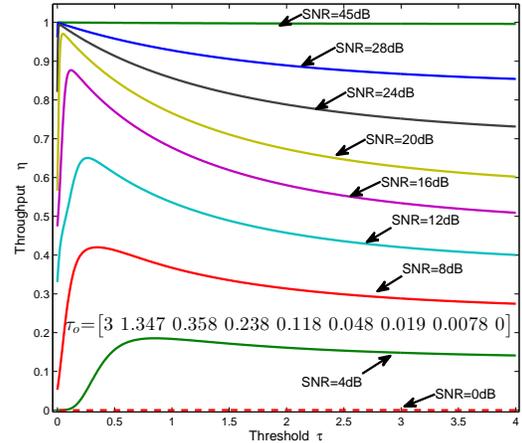}}
{\centering \includegraphics[width=8cm]{EtaVsTauFixingSNRCCSR}} 
\caption{Analytical throughput  \eqref{eq:PropThroughput} vs $\tau$ for  SNR operating points of CCSR method for $\mu=2$ transmission rounds.}
\label{fig:TauVsThroughputCCSR} 
\end{figure}
Since throughput $\eta$ is non-convex function in parameter $\tau$,
optimal $\tau$ that maximizes throughput $\eta$ for each SNR can
be computed off-line using exhaustive search. Thus, a table of optimal
threshold $\tau$ which maximizes throughput for  SNR operating points
can be generated using throughput expression in  \eqref{eq:PropThroughput}
 for CCSR method. Note that  throughput of the proposed CCSR method is function $\eta=f(\tau,\, SNR)$ given in  \eqref{eq:PropThroughput}.

In Section~\ref{Sec:simulation}, we maximize $\eta=f(\tau,\, SNR)$
with respect to parameter $\tau$. Note that parameter $\tau$ appears
in probability of frame error $p_{\epsilon}$, which is function of
probability of bit-error presented in \eqref{eq:CCSRThrm} Section~\ref{sec:perfomnceanlysis}. The optimal $\tau_o$ can be computed off-line from throughput lower bound for CCSR using  \eqref{eq:PropThroughput}. Based on channel condition, amount of information to be transmitted can be controlled using vector $\tau_o$. Figure~\ref{fig:TauVsThroughputCCSR} shows that optimal threshold $\tau_o$ for  SNR points which maximizes throughput CCSR method for $\mu=2$. In low SNR regime, throughput $\eta$ is more sensitive to threshold $\tau$ as compared to high SNR regime  due to the fact that in high SNR regime, very few errors occur during first transmission resulting into fewer retransmissions.


\section{Simulation}
\label{Sec:simulation} 
Now we present performance  of the proposed  CCSR  method in comparison with conventional Chase combining methods. In throughput performance, we consider optimized threshold $\tau_o$  which controls the amount of information in selective retransmission for OFDM systems. In simulation setup, we consider 4-QAM constellation and OFDM signaling with $N_{s}=512$ sub-carriers over 10-tap  Rayleigh fading frequency selective channel. Each complex OFDM channel realization has Gaussian
distribution with zero-mean and unit variance ($\sigma^2_h=1$). We
assume block fading channel in quasi-static fashion such that channel
remains highly correlated during transmission of one OFDM symbol. First,
we present comparison of  BER upper
bound  and BER from Monte Carlo simulation denoted by $BER_a$ and $BER_m$, respectively, for CCSR method. We also provide throughput
results of CCSR method in comparison with conventional Chase
combining method. We denote analytical and simulation throughput by $\eta_a$ and $\eta_m$, respectively. In order to maximize throughput, threshold $\tau$ on channel norm of OFDM
sub-carriers is optimized for each SNR point of CCSR protocol. We compute
threshold vector $\tau_o$ off-line to maximize throughput of CCSR method from the analytical throughput using  \eqref{eq:PropThroughput}. We also demonstrate that  our proposed CCSR method holds throughput gain  with FEC.  We consider half-rate LDPC code (648, 324) to evaluate efficacy of CCSR method as compared to conventional CC-HARQ. We denote CCSR method with FEC enabled by CCSR-HARQ in simulation results.  
 \begin{figure}[ht]
\ifthenelse{\boolean{single}}{\centering \includegraphics[width=11cm]{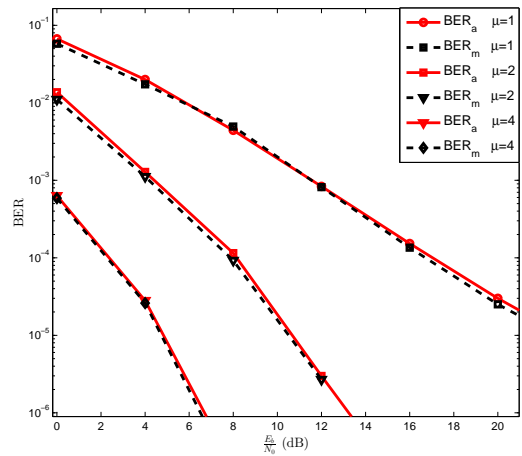}}
{\centering \includegraphics[width=8cm]{CCSRBERBound}} 
 \caption{Monte Carlo simulation Vs  BER upper bound \eqref{eq:CCSRThrm} of CCSR method for $\mu=1,2$ and $4$ transmission rounds with optimal threshold $\tau_o$.}
\label{fig:BERAnalVssimuCCSR} 
\end{figure}
\begin{figure}[h!]
\ifthenelse{\boolean{single}}{\centering \includegraphics[width=11cm]{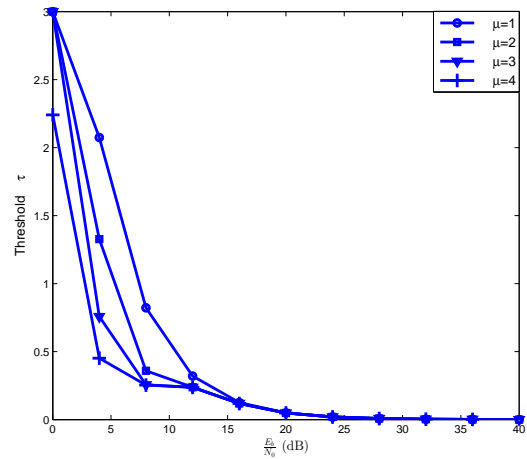}}
{\centering \includegraphics[width=8cm]{TauComparisonCCSRforDifferentJ}} 
 \caption{$\frac{E_b}{N_o}$ vs optimal threshold $\tau_o$ which maximizes throughput of CCSR-ARQ method for $\mu=1,2,3\;\text{and }\;4$ transmission rounds.} 
\label{fig:tauComparison1} 
\end{figure}

First, we present BER and throughput performance of CCSR method in comparison with conventional CC for $\mu=1,2$ and $4$ transmission rounds using optimal threshold $\tau_o$ without FEC. Note that each SNR point has an associated threshold $\tau_o$ which maximizes throughput at that very SNR point. For analytical BER and throughput performance, we use \eqref{eq:CCSRThrm} and \eqref{eq:PropThroughput}, respectively. We denote CCSR and conventional CC methods without FEC by CCSR-ARQ and CC-ARQ, respectively.
 Figure~\ref{fig:BERAnalVssimuCCSR} provides  BER comparison of Monte Carlo simulation and BER upper bound of CCSR-ARQ method for $\mu=1,2,\;\text{and }\;4$. Figure~\ref{fig:BERAnalVssimuCCSR} reveals that there is marginal gap between analytical and simulation bit-error rates $BER_a$ and  $BER_m$, respectively. 

 In Figure~\ref{fig:tauComparison1}, we present optimal threshold  $\tau_o$ as a function of SNR. As Figure~\ref{fig:tauComparison1} shows, at $\frac{E_b}{N_o} \ge 8$dB, optimal $\tau_o$ for $\mu=4$ converges to the $\tau_o$ for $\mu=3$. Similarly, at $\frac{E_b}{N_o} \ge 12$dB, optimal $\tau_o$ for $\mu=3$ and $\mu=4$ converge to the $\tau_o$ for $\mu=2$. At $\frac{E_b}{N_o} \ge 16$dB,  optimal $\tau_o$ for $\mu=2$, $\mu=3$ and $\mu=4$ converge to the $\tau_o$ for $\mu=1$. This is  due to the fact that as  $\frac{E_b}{N_o}$ increases, the probability of entering into next round of transmission (retransmission) at MAC  layer decreases. For example, at $\frac{E_b}{N_o} \ge 8$dB, probability that system reaches 4 rounds of transmission is very low and joint detection for $\mu=4$ converges to the joint detection  for $\mu=3$. We observe similar trend at $\frac{E_b}{N_o} =12,dB\;\text{and}\;16$dB for $\mu=3$ and $2$, respectively. 

\begin{figure}[t!]
\ifthenelse{\boolean{single}}{\centering \includegraphics[width=11cm]{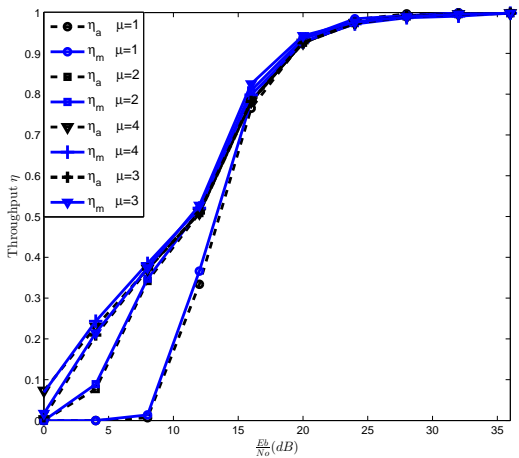}}
{\centering \includegraphics[width=8cm]{ThroughputBoundCCSR}} 
\caption{Monte Carlo simulations  Vs analytical throughput using \eqref{eq:PropThroughput} of CCSR-ARQ method for $\mu=1,2\;\text{and }\;4$ transmission rounds with optimal threshold $\tau_o$.}
\label{fig:througputCCSRptimizedVsConventionCC} 
\end{figure}

Throughput is the key performance metric of a communication system. Now we provide results for tight lower throughput bound and throughput comparison between CCSR-ARQ and CC-ARQ methods.
For throughput computation, we use standard definition of throughput of a communication system \cite{ShuLin} as $\dfrac{k}{n}.$
 In CCSR-ARQ method, threshold vector $\tau_o$ in Figure~\ref{fig:tauComparison1} is computed offline using \eqref{eq:etao} to maximize throughput $\eta$.
 Figure \ref{fig:througputCCSRptimizedVsConventionCC} presents comparison of analytical throughput using \eqref{eq:PropThroughput} with Monte Carlo throughput of CCSR-ARQ using optimal $\tau_o$ for $\mu=1,2\;\text{and}\;4$ transmission rounds. Marginal gap between analytical and simulation throughput demonstrate that   \eqref{eq:PropThroughput} provides tight throughput lower bound for CCSR-ARQ for $\mu=1,2,4$. Consistent with Figure~\ref{fig:tauComparison1}, at $\frac{E_b}{N_o} \ge 8$dB, throughput for $\mu=4$ converges to the throughput for $\mu=3$. Similarly, at $\frac{E_b}{N_o} \ge 12$dB, throughput for $\mu=4$ converge to the throughput for $\mu=2$. Also, at $\frac{E_b}{N_o} \ge 16$dB, throughput for $\mu=2$ and $\mu=4$ converge to the throughput for $\mu=1$.   

\begin{figure}[ht]
\ifthenelse{\boolean{single}}{\centering \includegraphics[width=11cm]{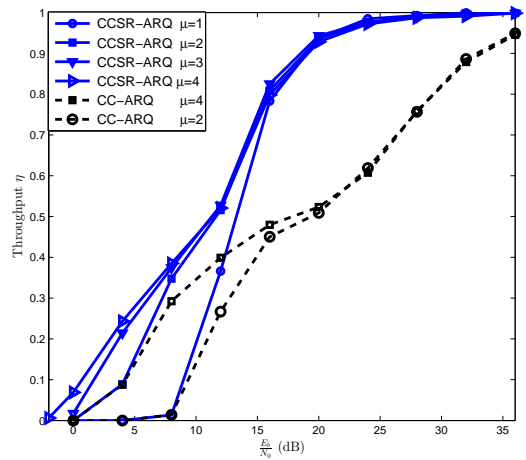}}
{\centering \includegraphics[width=8cm]{ThroughputCCSRvsCC}} 
\caption{ Throughput comparison of conventional Chase combining CC-ARQ $\mu=2,4$ transmission rounds and optimal CCSR-ARQ for $\mu=1,2,3,4$ transmission rounds using Monte Carlo simulations.}
\label{fig:ThroughputCCSRvsCCARQ} 
\end{figure}

  Now we provided throughput comparison between CCSR-ARQ and CC-ARQ methods for different number of transmission rounds. Figure \ref{fig:ThroughputCCSRvsCCARQ} demonstrates that CCSR-ARQ method achieves significant performance gain over conventional  CC-ARQ method for $\mu=1$, $2$, $3$ and $4$ transmission rounds. Note that throughput of CC-ARQ for $\mu=2$ is similar to that of CCSR for $\mu=1$ up to $\frac{E_b}{N_o}=8$dB. This is due to the fact that in low SNR regime, $\tau_o$ which maximizes throughput of the system has large value resulting selective retransmission of first round of CCSR into full packet retransmission. Thus, amount of information transmitted in first round of CCSR-ARQ equals amount of information of two rounds of CC-ARQ. For $\frac{E_b}{N_o}\ge 8$dB under $\mu=1$ for CCSR-ARQ, $\tau_o$ decreases as SNR increases and selective retransmission becomes more effective providing larger throughput gain over CC-ARQ. Thus, we  observe significant throughput gain of CCSR-ARQ method over CC-ARQ method in moderate and high SNR regime. We notice similar trend when we compare CCSR-ARQ for $\mu=2$ and CC-ARQ for $\mu=4$. It is important to note that in low SNR regime, $\tau_o$ converges to large values and throughput of CCSR-ARQ  converges to that of CC-ARQ. Figure \ref{fig:ThroughputCCSRvsCCARQ} also reveals that large transmission rounds are more effective in low SNR regime.
\begin{figure}[t!]
\ifthenelse{\boolean{single}}{\centering \includegraphics[width=11cm]{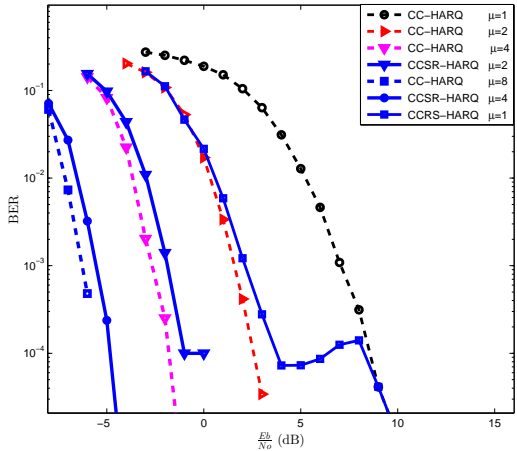}}
{\centering \includegraphics[width=8cm]{BERCCSRWithLDPCcopmarisonforAllJ}} 
\caption{BER comparison of CCSR-HARQ and conventional CC-HARQ methods for $\mu=1,2,4$ using half rate LDPC code (324, 648) with optimal $\tau_o$.}
\label{fig:HBERSCCwithFEC} 
\end{figure}

\begin{figure}[ht]
\ifthenelse{\boolean{single}}{\centering \includegraphics[width=11cm]{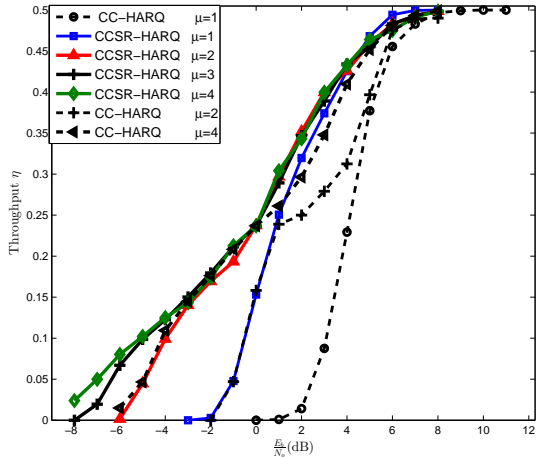}}
{\centering \includegraphics[width=8cm]{BERSCCthroughputcopmarisonOnlyLDPCNew}} 
\caption{Throughput comparison CC-HARQ for $\mu=1,2,4$ and CCSR-HARQ for $\mu=1,2,3,4$ transmission rounds with half rate LDPC code(324,648) using $\tau=\tau_o$.}
\label{fig:EtaSCCwithFEC} 
\end{figure}

Aforementioned results demonstrate significant throughput gain of CCSR-ARQ method over CC-ARQ method (without FEC). Now we present BER and throughput performance comparison between CCSR and CC methods with FEC. We denote CCSR and CC methods with FEC by CCSR-HARQ and CC-HARQ, respectively. Figure~\ref{fig:HBERSCCwithFEC} presents simulation results of  BER performance of conventional CC-HARQ and proposed CCSR-HARQ methods with half-rate LDPC (648,324) encoder under OFDM signaling. For BER simulation of CCSR-HARQ method, we use optimal threshold $\tau_o$ which maximizes throughput of CCSR-HARQ method for $\mu=1, 2,\;\text{and}\;4$.
 The threshold vectors for $\mu=1, 2,\;\text{and}\;4$ used in Figure~\ref{fig:HBERSCCwithFEC} and Figure \ref{fig:ThroughputCCSRvsCCARQ} are given in Table~\ref{Table1}.
\begin{table*}[t]
 \caption{SNR vs $\tau_o$ for $\mu=1, 2, 3$ and $4$} 
    \begin{tabular}{|r|l|l|l|l|l|l|l|l|l|l|l|l|l|l|l|}
        \hline
          $\frac{E_b}{N_o}  \big /  \mu $ & -8   &-6  &-5    &-3  &-2    &0  &1  & 2 &3  &4  &5  &6  &7  &8  \\ \hline
        1 &  3 & 3  & 3 & 2.2 & 1  & 0.595 & 0.43 & 0.301 & 0.16 & 0.090 & 0.030& 0.006 & 0 & 0 \\ \hline
        2 & 3 & 3 & 2.82  & 1.274 & 0.423  & 0.9 & 0.5 & 0.35 & 0.2 & 0.11 & 0.034 & 0.004 & 0 & 0 \\ \hline
        3  &3  &1.9&  1.741&  0.97 & 0.4   & 0 & 0.7 & 0.407 & 0.204 & 0.118 & 0.002 & 0 & 0 &0  \\ \hline
          4  &2.4   &1.338  &0.949   & 0.072 &0.023   &0  &0.63  &0.36  &0.235  &0.09  & 0.0038 &0.009 & 0.0005 &0  \\
        \hline
    \end{tabular}
		\label{Table1}
\end{table*}
Note that CCSR-HARQ and CC-HARQ methods are same for $\tau=0$. For very large value of threshold $\tau \rightarrow\infty$, BER performance of joint detection of CCSR-HARQ for $\mu$ transmission rounds is same as that of CC-HARQ for $2\mu$ rounds. 
In Figure~\ref{fig:HBERSCCwithFEC}, we use optimal threshold $\tau_o$ on selective retransmission in PHY which maximizes throughput $\eta$. In high SNR regime,  single round $\mu=1$ of CC-HARQ method achieves low probability of error resulting into low packet error rate and low probability of enetering into second round of transmission. In such scenario, optimal threshold $\tau_o\rightarrow 0$ and CCSR-HARQ method becomes CC-HARQ method. For $\frac{E_b}{N_o}> 4 \text{dB}$, $\tau_o\rightarrow 0$ and BER curve of CCSR-HARQ for $\mu=1$ merges with BER of CC-HARQ method.
  Similarly, we notice that under $\mu=2$ at  $\frac{E_b}{N_o}> -2 \text{dB}$, optimal threshold $\tau_o$, which maximizes throughput, is very small resulting into merging of BER for $\mu=2$ and $\mu=1$. In low SNR regime, BER and throughput of large transmission rounds is better than small transmission rounds. As channel condition improves, the CCSR-HARQ with large $\mu$ behaves similar to CCSR-HARQ method with small $\mu$.
Furthermore, at a given SNR, there are multiple values of $\tau$ which achieve similar throughput. The small threshold on channel norm $\tau$ has lower retransmission overhead and results into higher BER. As Figure~\ref{fig:HBERSCCwithFEC} reveals, BER marginally degrades when $\frac{E_b}{N_o}$ improves from 5 to 8 dB and still provides optimal throughput.

We present throughput of proposed CCSR-HARQ method in comparison with conventional CC-HARQ method in  Figure \ref{fig:EtaSCCwithFEC}. We provide Monte Carlo simulation results of throughput for CCSR-HARQ and CC-HARQ methods for $\mu=1,2,3,\;\text{and}\; 4$ transmission rounds using simulation setup of Figure~\ref{fig:HBERSCCwithFEC}. Throughput of CCSR-HARQ for $\mu=1$ is significantly higher than throughput of CC-HARQ for $\mu=1$, which demonstrates the efficacy of CCSR method. Also throughput of CCSR-HARQ for $\mu=1$ and CC-HARQ for $\mu=2$ are very close when $\frac{E_b}{N_o} \le 0$dB. However, for $\frac{E_b}{N_o} \ge 0$dB, CCSR-HARQ has higher throughput than CC-HARQ. For larger transmission rounds, CCSR-HARQ is more effective in low SNR regime as shown in Figure~\ref{fig:HBERSCCwithFEC}. Note that impact of selective retransmission in CCSR-HARQ method is not significant for $\mu=4$ in higher SNR regime due to the fact that CCSR-HARQ has low probability of CRC failure. Figure \ref{fig:EtaSCCwithFEC} also reveals that CCSR-HARQ for $\mu=4$ first converges to CCSR-HARQ for $\mu=3$ at $\frac{E_b}{N_o}\ge -6 $dB. Both curves continue in overlap fashion and converge to CCSR-HARQ for $\mu=2$ at $\frac{E_b}{N_o}\ge 0$dB and then $\mu=2,3,$ and $4$ follows CCSR-HARQ for $\mu=1$ at $\frac{E_b}{N_o}\ge 3 $dB similar to Figure \ref{fig:ThroughputCCSRvsCCARQ}. 

\vspace{-0.3cm}
\section{Conclusion}

\label{Sec:CONCLUSION} We presented bandwidth efficient cross-layer design under OFDM modulation
using selective retransmission sub-layer at PHY level. The throughput performance comparison demonstrated that the proposed CCSR method outperforms conventional Chase combing method. In performance analysis, we  provided tight BER upper bound  and throughput lower bound for the proposed CCSR method. The simulation results suggest that BER and throughput performances from Monte Carlo runs have marginal performance gap from that of BER upper bound and throughput lower bound, respectively.
In order to maximize throughput of the CCSR method, we  optimized threshold $\tau$ which controls amount of information to be retransmitted by embedded selective retransmission at PHY level. The simulation results also demonstrate significant
throughput gain of optimized selective retransmission method over
conventional Chase combining method with and without channel coding.
\vspace{0.3cm}

\ifCLASSOPTIONcaptionsoff \newpage{}\fi 
\bibliographystyle{IEEEtran}
\bibliography{CCWSAnalOptiThroughput}

\end{document}